\documentclass[lettersize,journal]{IEEEtran}
\usepackage{amsmath,amsfonts}
\usepackage{array}
\usepackage[caption=false,font=normalsize,labelfont=sf,textfont=sf]{subfig}
\usepackage{textcomp}
\usepackage{stfloats}
\usepackage{url}
\usepackage{verbatim}
\usepackage{graphicx}
\def\BibTeX{{\rm B\kern-.05em{\sc i\kern-.025em b}\kern-.08em
    T\kern-.1667em\lower.7ex\hbox{E}\kern-.125emX}}
\usepackage{balance}

\usepackage{cite}
\usepackage{amsmath,amssymb,amsfonts}
\usepackage{amsthm}
\usepackage{graphicx}
\usepackage{textcomp}
\usepackage{xcolor}
\usepackage{hyperref}
\usepackage{booktabs}
\usepackage[nameinlink,capitalize]{cleveref}
\usepackage{algorithm}
\usepackage{algpseudocode}

\theoremstyle{plain}
\newtheorem{assumption}{Assumption}
\theoremstyle{plain}
\newtheorem{lemma}{Lemma}
\theoremstyle{plain}

\theoremstyle{plain}
\newtheorem{remark}{Remark}
\theoremstyle{plain}
\newtheorem{theorem}{Theorem}
\theoremstyle{plain}

\theoremstyle{plain}
\newtheorem{definition}{Definition}

\crefname{assumption}{Assumption}{Assumptions}
\crefname{lemma}{Lemma}{Lemma}

\allowdisplaybreaks[4]

\begin{document}


{\title{\color{black} Three Birds, One Stone: Solving the Communication-Memory-Privacy Trilemma in LLM Fine-tuning Over Wireless Networks with Zeroth-Order Optimization
}

\author{
        Zhijie Cai, Yuhao Zheng, Haolong Chen, Dongzhu Liu, Bin Wang, Guangxu Zhu
\thanks{Z. Cai and H. Chen are with Shenzhen International Center for Industrial and Applied Mathematics, Shenzhen Research Institute of Big Data, The Chinese University of Hong Kong-Shenzhen, Guangdong, China.} 
\thanks{Y. Zheng is with the School of Information and Communication Engineering, Beijing University of Posts and Telecommunications, Beijing 100876, China. }
\thanks{D. Liu is with the School of Computing Science, University of Glasgow, Glasgow, U.K.}
\thanks{B. Wang and G. Zhu are with Shenzhen Research Institute of Big Data, The Chinese University of Hong Kong-Shenzhen, Guangdong, China.}
\thanks{\textit{Corresponding authors: B. Wang (wangbin@sribd.cn) and G. Zhu (gxzhu@sribd.cn).}}
}



\maketitle

\begin{abstract}
\textcolor{black}{Federated Learning (FL) offers a promising pathway for collaboratively fine-tuning Large Language Models (LLMs) at the edge; however, this paradigm faces a critical bottleneck: the prohibitive communication and memory overheads incurred by exchanging high-dimensional gradients. Furthermore, recent studies reveal that user training data can still be recovered from these local gradients, undermining the core privacy promise of FL. In this paper, we address this trilemma of communication, memory, and privacy by proposing \emph{pAirZero}, a novel framework that synergizes Zeroth-Order (ZO) optimization with Over-the-Air (OTA) computation. {\color{black} Uniquely, \emph{pAirZero} enables resource-constrained devices to submit their local gradient with only bit-level communication loads while participating in federated fine-tuning of LLMs with inference-level memory costs. This approach not only eliminates the high memory requirements needed for LLM fine-tuning but also alleviates the strict synchronization requirements that plague conventional OTA methods. } We further formulate a rigorous optimization model to {\color{black} adaptively} determine the optimal transmit power and noise levels, ensuring consistent privacy protection regardless of channel conditions. Numerical experiments demonstrate the superiority of \emph{pAirZero} in enabling secure, efficient LLM fine-tuning over wireless networks, {\color{black} with only $25\%$ peak memory cost on OPT-125M and magnitudes-of-order lower communication load compared to conventional methods}.}
\end{abstract}

\begin{IEEEkeywords}
Federated learning, zeroth-order optimization, LLM fine-tuning, over-the-air computation, differential privacy.
\end{IEEEkeywords}

\section{Introduction}
The advent of Large Language Models (LLMs) has revolutionized artificial intelligence, enabling unprecedented capabilities in natural language understanding and generation. However, the sheer scale of these models—often spanning billions of parameters—traditionally confines their deployment to centralized data centers. This centralization creates a critical disconnect: while the most valuable, context-rich data resides at the network edge (on smartphones, IoT devices, and vehicles), the computational power required to learn from it remains locked in the cloud. {\color{black} In contrast, recent regulations such as the GDPR (General Data Protection Regulation) and the ADPPA (American Data Privacy and Protection Act) have imposed restrictions on sharing privacy-sensitive data among different clients or platforms. Consequently, collecting a sufficient amount of data to train machine learning models for these applications is a challenging task. The gap becomes even more emergent, as pointed out by \cite{villalobos2024will}, publicly available data will soon be depleted. Therefore, it is crucial to enable the secure use of personal data for model training. } Federated Learning (FL) has emerged as the de facto solution to bridge this gap, allowing edge devices to collaboratively fine-tune models without exposing raw private data.

Yet, deploying FL for LLMs at the edge introduces a formidable ``trilemma" that current infrastructure struggles to support:

\begin{itemize}
\item \textbf{Communication Bottleneck:} Transmitting the gradients of massive LLMs overwhelms the limited bandwidth of wireless edge networks.
\item \textbf{Memory Wall:} Standard backpropagation (BP) requires storing intermediate activation maps, demanding memory far exceeding the capacity of typical edge devices.
\item \textbf{Privacy Leakage:} Despite FL's promise, recent gradient inversion attacks \cite{zhu2019deep} have demonstrated that sensitive user data can be reconstructed from the very gradients intended to protect it.
\end{itemize}

Existing solutions often tackle these challenges in isolation. Quantization \cite{bernstein2018signsgd} and sparsification \cite{stich2018sparsified} reduce communication but overlook memory constraints. Split learning alleviates memory pressure but incurs high latency \cite{thapa2022splitfed}, and Differential Privacy (DP) mechanisms often degrade model accuracy \cite{dwork2006differential, abadi2016deep}. Furthermore, traditional Over-the-Air (OTA) computation—a physical-layer technique that aggregates signals via channel superposition—offers a potential solution to bandwidth constraints \cite{zhu2019broadband}; however, it suffers from strict synchronization requirements \cite{pradhan2025experimental} and analog signal distortion \cite{yang2021revisiting}.

In this paper, we propose a unified framework that leverages these challenges to create synergies. We observe that Zeroth-Order (ZO) optimization, which estimates gradients by probing the loss function with random perturbations, naturally aligns with the properties of OTA computation. ZO is inherently robust to the noise that plagues analog OTA transmission, turning the wireless channel's imperfections into a mechanism for differential privacy. Furthermore, by eliminating the need for backpropagation, ZO optimization shatters the memory wall; by leveraging the superposition property of wireless channels, OTA computation dissolves the communication bottleneck. 
{\color{black} Ultimately, a classic ZO algorithm, known as SPSA \cite{spall1992multivariate}, if combined with PRNG \cite{salmon2011parallel}, can reduce the per iteration communication load to bit-level, as will be detailed in \cref{sec:learning}.}

Building on this insight, we introduce \emph{pAirZero}, a novel privacy-preserving FL framework designed specifically for fine-tuning LLMs at the edge. Unlike traditional approaches that treat privacy as an afterthought, \emph{pAirZero} embeds DP directly into the wireless transmission process. We utilize the inherent noise of wireless channels, supplemented by artificial noise injection, to mask individual contributions "in the air." This creates a privacy-by-design architecture where the aggregation process itself acts as the privacy mechanism.

Our specific contributions are as follows:

\begin{itemize}
\item \textbf{Holistic Framework for Efficient Edge LLM Fine-Tuning:} We propose \emph{pAirZero}, the first framework to synergize ZO optimization with OTA computation for {\color{black} private} LLM fine-tuning. This combination {\color{black} natively supports a wider range of indifferentiable objectives,} reduces memory consumption to inference-level standards {\color{black}($75 \%$ reduction)} and decouples communication costs from the number of participating devices. Moreover, ZO reduces the per-iteration load to a bit-level by refraining from sending true gradients, instead sending gradient projections on pseudo-random directions.
\item \textbf{Privacy-by-Design Transmission:} We develop a rigorous differential privacy mechanism embedded within the gradient transmission. By optimizing the transmit power and injecting calibrated artificial noise, we ensure that the aggregated signal remains useful for learning while mathematically guaranteeing that individual user data cannot be distinguished.
\item \textbf{Empirical Validation:} Extensive experiments on the {\color{black} well-recognized and light-weighted} OPT-125M model demonstrate that \emph{pAirZero} achieves test {\color{black} performances} comparable to that of ideal, non-private baseline while significantly outperforming standard methods {\color{black}and naive schemes }in terms of communication and memory efficiency.
\end{itemize}

The remainder of this paper is organized as follows. We first review related works in \cref{sec:related}, introduce the system model in \cref{sec:mod&def}, and present the algorithm design in \cref{sec:alg}. 
We conduct convergence bound analysis in \cref{convergence-pairzero}, and minimize it with optimization problems in \cref{optmini}.
We present empirical evaluations \cref{sec:sim} and conclude this work in \cref{sec:conclude}.

\section{Related Works}\label{sec:related}

\subsection{Communication-Efficient FL} 
The benefits of FL come at the expense of extremely high communication costs, as FL requires frequent gradient uploads \cite{konevcny2016federated}, 
The communication {\color{black} bottleneck} of FL originates in the excessively high dimensionality of the gradient vectors.
This bottleneck becomes even more prominent when training or fine-tuning LLMs. Compression-based methods offer a way to reduce the per-iteration communication cost in FL. The first direction is to sparsify \cite{stich2018sparsified} the local gradient such that only a small portion of elements need to be uploaded to the edge server. The rationale behind this is that most of the gradient elements are small (in magnitude), and reserving only those significant elements can well approximate the direction of the local gradient. A similar strategy is to let each client upload a quantized local gradient \cite{bernstein2018signsgd}
. In addition to element-wise compression, client selection-based methods \cite{chen2018lag} and local training-based methods \cite{pmlr-v54-mcmahan17a, zhang2021fedpd, mishchenko2022proxskip} can also be regarded as special types of compression-based methods. The rationale behind these methods is to reduce the upload frequency of each client, thereby alleviating the communication burden in each iteration; however, this often occurs at the cost of slower convergence.

{\color{black} Apart from machine learning algorithm design, advanced communication protocols can also be employed to improve communication efficiency.}
Orthogonal multiple access (OMA) allows numerous clients to share the same communication resource (such as time, frequency, or code) by assigning mutually orthogonal signal resources to different clients, thereby enabling the receiver to avoid interference between clients. However, the number of clients that can simultaneously be served is limited by the total amount of orthogonal resources (such as time slots and subcarriers). In massive connectivity scenarios such as federated learning, this becomes a bottleneck. An alternative to OMA is the so-called non-orthogonal multiple access (NOMA). In NOMA, clients share the same time-frequency resources, with differentiation achieved through power differences (power-domain NOMA) \cite{maraqa2020survey} or sparse codebooks (code-domain NOMA) \cite{liu2021sparse}. Theoretically, NOMA can support overloaded transmission scenarios in which the number of clients exceeds the number of available resource blocks. FL applications favor such a characteristic \cite{sun2020adaptive}. However, NOMA is affected by the phenomenon of error propagation. Specifically, if one client's signal is incorrectly decoded, the error will accumulate, causing all subsequent decoding attempts to fail as well. In conventional FL, the edge server averages all local gradients after receiving each individual. 

Similar to NOMA, OTA also allows all clients to share the same radio resources. The key difference is that the gradient average is obtained through the electromagnetic superposition of all clients' wireless signals. Therefore, OTA does not require decoding each client's signal. In OTA, all client transmits their local gradients synchronously with uncoded transmission \cite{zhu2019broadband,sery2020analog,amiri2020machine,zhang2021gradient,yang2020federated}. The performance can be further enhanced with proper power control \cite{cao2020optimized}, beamforming design \cite{yao2025over}, and inference management \cite{yao2025energy}. While these works assume analog modulations, \cite{zhu2020one} established a digital modulation-based framework. However, all the above-mentioned methods require stringent transmit synchronization across different clients, which is hardly achievable in practical systems (see \cite{pradhan2025experimental} for an experimental implementation). 

\subsection{{\color{black} Memory-Efficient FL}}
{\color{black} We notice that there are some overlapping techniques between memory-efficient and communication-efficient FL. In the particular area of fine-tuning, Parameter-Efficient Fine-Tuning (PEFT) \cite{Adapter, Prefix-tuning, LoRA} is deeply related to compression-based methods. For example, the essence of Low Rank Adaptation (LoRA) \cite{LoRA} is to freeze the weights of the pre-trained model and then update several additional low-rank trainable parameter matrices. Therefore, this method can be easily incorporated into FL to reduce communication and memory costs, as seen in \cite{sun2024improving, cho2023heterogeneous, zhang2023fedpetuning}. 
Nevertheless, these methods still face communication bottlenecks because the communication load still scales with the number of learnable parameters. Moreover, they still rely on first-order (FO) optimization methods that require excessive memory, such as SGD and Adam. 
}

Recently, the approach of zeroth-order (ZO) optimization \cite{spall1992multivariate}
is regarded as a possible solution to overcome this {\color{black} bottleneck}. Instead of directly computing the gradient via backpropagation, ZO estimates the gradient by calculating one or a few random directional derivatives. The effectiveness of ZO-SGD for fine-tuning LLMs in a centralized setting is first demonstrated in \cite{malladi2023fine}. It shows that ZO-SGD performs comparably to full-parameter fine-tuning across various tasks while reducing memory consumption by up to 12 times \cite{malladi2023fine}. In this regard, uploading the estimated gradient can be simplified to transmitting only the {\color{black} gradient projection on the designated random direction coded by the random seed}. Subsequently, ZO-based methods have been rapidly adopted in federated learning \cite{ZO-strength-1}
, with the focus shifting from memory reduction to improving communication efficiency. More recently, studies such as FedKSeed \cite{FedKSeed} and FwdLLM \cite{FwdLLM} have demonstrated that, empowered by ZO, even low-capacity networks with transmission rates of only a few kilobytes per second can support federated LLM fine-tuning. Additionally, a recent pioneering work \cite{chen2025zeroth} combines OTA and ZO to enhance the spectrum efficiency further.

\begin{figure*}[t]
    \centering
    \includegraphics[width=0.9\linewidth]{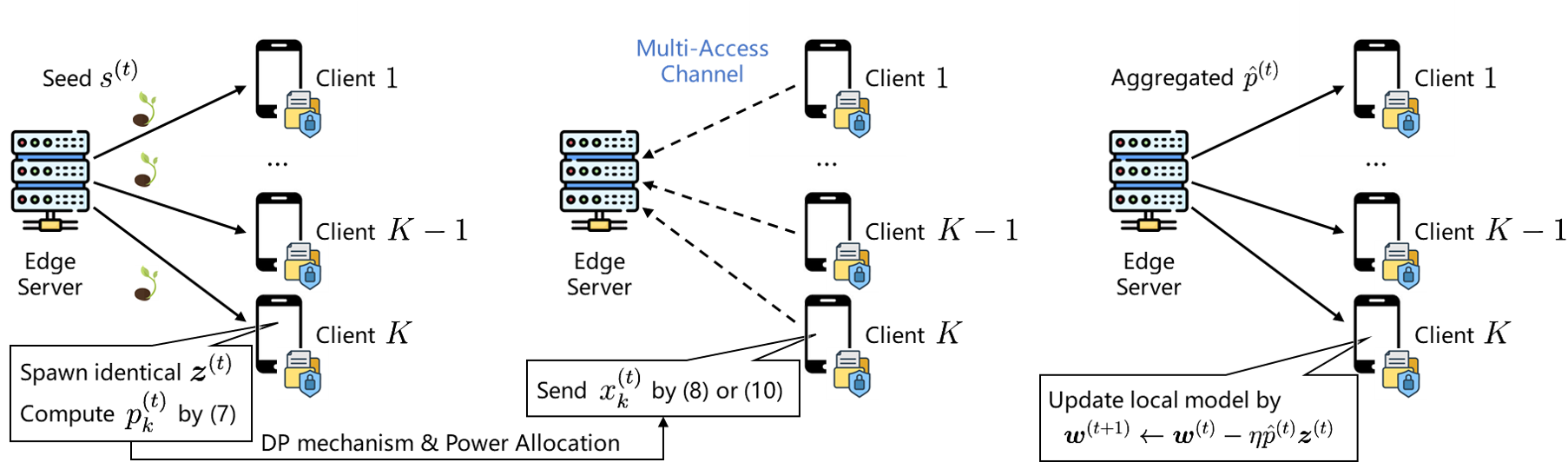}
    \caption{The workflow of \emph{pAirZero}.}
    \label{fig:placeholder}
\end{figure*}

\subsection{Differentially Private FL} 
Even though FL avoids direct data exposure, it is still possible to recover client data from the uploaded local gradient (see \cite{zhu2019deep}). Malicious clients could potentially infer the presence of an individual data sample from a learned model through a membership attack \cite{shokri2017membership} or a model inversion attack \cite{fredrikson2015model}. Differential privacy provides a cheap and convenient way to enhance privacy protection. The core idea of DP is to introduce uncertainty into the disclosed information, thereby obscuring contributions from individual data points. A considerable amount of research has been conducted on the application of DP in FL, including standard FL \cite{wei2020federated}, NOMA-based FL \cite{seif2020wireless}, digital quantization-based FL \cite{sonee2020efficient}, and OTA-based FL \cite{liu2020privacy}. However, these works all employ traditional back-propagation for model training. More recently, DP has also been combined with ZO-based FL methods, as seen in \cite{zhang2023dpzero, liu2024differentially, tang2024private}. Nevertheless, these studies do not account for imperfect communications. Therefore, differential privacy cannot guarantee a consistent level of privacy protection.

\section{{\color{black} System Model}}\label{sec:mod&def}

\subsection{Learning Model}
We consider a wireless federated edge learning system consisting of a single-antenna edge server and $K$ single-antenna clients. Each client is connected to the edge server via a shared noisy channel. We aim to fine-tune an LLM collaboratively; this amounts to solving the following problem: 
\begin{align}\label{eq:aim}
    \min\limits_{\boldsymbol{w} \in \mathbb{R}^d} F(\boldsymbol{w}), 
\end{align}
where $\boldsymbol{w}\in\mathbb{R}^d$ is the model vector to be fine-tuned, 
\begin{align}
    \textstyle\quad F(\boldsymbol{w}) \triangleq \frac{1}{K} \sum_{k=1}^K F_k(\boldsymbol{w}),
\end{align}
is the global loss function, 
\begin{align}
     \textstyle F_k(\boldsymbol{w}) \triangleq \frac{1}{|\mathcal{D}_k|} \sum_{\{\boldsymbol{u}_{k, i}^{(t)}, v_{k, i}^{(t)}\}\in\mathcal{D}_k} f(\boldsymbol{w}; \boldsymbol{u}_{k, i}^{(t)}, v_{k, i}^{(t)}),
\end{align}
is the local loss function held by client $k$, $\mathcal{D}_k$ is the client $k$'s local training dataset, and $\{\boldsymbol{u}_{k, i}^{(t)}, v_{k, i}^{(t)}\}$ is the $i$th training data sample. 

We assume a SGD-based wireless federated edge learning process, specifically, each client $k$ estimate the gradient of its local loss function $\tilde{\boldsymbol{g}}^{(t)}_k$ at each iteration $t$, then the clients update their models by $\boldsymbol{w}^{(t+1)} = \boldsymbol{w}^{(t)} - \eta \frac{1}{K}\sum_{k=1}^K \tilde{\boldsymbol{g}}^{(t)}_k $.

\subsection{Communication Model}
The essence of OTA transmission is to obtain an average of several transmitted numerical values via the superposition of electromagnetic waves emitted by the clients, each encoded by $p_k^{(t)}$. Consider a block fading channel where the channel coefficient remains unchanged within the $t$-th transmission, the received signal at the edge server can be written as 
\begin{align}
   \textstyle y^{(t)} = \sum_{k=1}^K h_k^{(t)} x_k^{(t)}+z^{(t)},
\end{align}
where $h_k^{(t)}\in\mathbb{C}$ is the channel coefficient between the $k$th client and the edge server, $x_{k}^{(t)}\in\mathbb{C}${\color{black}, a processed version of $p_k^{(t)}$}, is client $k$'s uploading signal (containing both the desired signal and artificial noise), $z^{t}\in\mathbb{C}$ is the Gaussian random noise at the edge server. {\color{black} We defer the specific form of $x_k^{(t)}$ to the following sections.}

Upon receiving $y^{(t)}$, rather than recovering the noiseless average value, the edge server attempts to recover the average $\frac{1}{K}\sum_{k=1}^K p_k^{(t)}$ through channel inversion, namely,  
\begin{align}
    \textstyle \hat{p}^{(t)} = \frac{y^{(t)}}{K c^{(t)}}. \label{eq:estp}
\end{align}
After this, the noisy average is broadcast back to all clients\footnote{We assume that the downlink broadcast is noise-free. This is reasonable since the edge server is much more communication powerful than the clients.}. Upon receiving $\hat{p}^{(t)}$, the clients will get ready for a next round of computation.


\subsection{Trillema of Communication, Memory, and Privacy}

In conventional FO-based edge learning systems, {\color{black} we face significant communication costs when setting $p_k^{(t)}$ as the elements of the gradient vectors $\tilde{\boldsymbol{g}}_k^{(t)}$ computed by BP, which also incurs substantial memory consumption.} Moreover, privacy-preserving by DP introduces additional computation by generating and adding correct noise vectors to the original gradient, resulting in a significant computational load on resource-constrained devices. This incentivizes us to find a communication-memory-privacy-efficient way for edge LLM fine-tuning.

\section{{\color{black} \emph{pAirZero}: Communication-Memory-Privacy-\\Efficient Edge LLM Fine-tuning}}\label{sec:alg}

{\color{black} As done in most of prior works, the payload $x_k^{(t)}$ is the true estimated gradients from back-propagations. However, in order to improve the availability of LLMs on resource-constrained devices, we switch to ZO-based optimization. We first describe the local gradient estimation, then we elaborate on the details of transmission. We then confirm the feasibility of the design with convergence analyses. An illustration of the workflow is \cref{fig:placeholder}.}

\subsection{\color{black} Memory-efficient Gradient Estimation}
\label{sec:learning}
At each iteration, each client $k$ pulls a seed $s^{(t)}$ from the edge server and then estimates its local gradient by simultaneous perturbative stochastic approximation (SPSA): 
\begin{align}\label{eq:localg}
 \textbf{\text{Local gradient:}}\quad \boldsymbol{g}_k^{(t)} = p_k^{(t)}\boldsymbol{z}^{(t)},
\end{align}
where $\boldsymbol{z} \in \mathcal{N}(\boldsymbol{0}, \boldsymbol{I}_d)$ is random vector generated by the seed $p_k^{(t)}$, $\mu$ is the scale of perturbation, and $p_k^{(t)}$ is referred to as the gradient projection which is defined as 
\begin{align}\label{eq:projgrad}
    \textbf{\text{Local gradient projection:}} \ \textstyle p_k^{(t)} = \frac{F_k(\boldsymbol{w} + \mu \boldsymbol{z}^{(t)}) - F_k(\boldsymbol{w} - \mu \boldsymbol{z}^{(t)})}{2 \mu}.
\end{align}
In the above, $F_k(\boldsymbol{w} + \mu \boldsymbol{z}^{(t)})$ and 
$F_k(\boldsymbol{w} - \mu \boldsymbol{z}^{(t)})$ are computed using either the entire local training data or a mini-batch of local training data. In ZO, only $p_k^{t}$ needs to be uploaded to the edge server for gradient aggregation. This is because $\boldsymbol{z}^{(t)}$ can be generated by the random seed $s^{(t)}$ stored at the edge server. With each local gradient given by \cref{eq:localg}, the global aggregated gradient should be $\boldsymbol{g}^{(t)}=(\frac{1}{K}\sum_{k=1}^K p_k^{(t)})\boldsymbol{z}^{(t)}$. 

{\color{black}
\begin{remark}[Compatibility with indifferentiable objectives]
Unlike in FO methods, it is not necessary for $F_k$ to be a differentiable function. This will give ZO methods an advantage in dealing with a wider range of objective functions that are more often used in modern LLMs.
\end{remark}
}

{\color{black}
\begin{remark}[Memory efficiency]
In the above, instead of adopting back-propagation to compute a true gradient as done in most prior works, we employ a ZO gradient estimation method. Unlike acquiring excessive memory for backpropagation, the ZO method only requires inference-level memory overhead, as the gradient computation relies solely on loss function values. This will significantly reduce the memory overhead during the fine-tuning process, making it exceptionally appealing for clients (often assumed to be resource-constrained) participating in the FL system. A simple comparison can be found in \cref{memory-communication-efficiency}.
\end{remark}
}

\subsection{{\color{black} Communication-efficient Gradient Aggregation}}


{\color{black} In terms of communication, we consider two different designs: \emph{pAirZero} and \emph{Sign-pAirZero} as follows.}

\subsubsection{{\color{black} \emph{pAirZero}}}
With analog modulation, the sending signal is 
\begin{align}
x_k^{(t)}=\alpha_k^{(t)} (p_k^{(t)} + n_k^{(t)}), \label{eq:tx_analog}
\end{align}
where $\alpha_k^{(t)}$ is the scaling factor, $n_k^{(t)}\in\mathbb{C}$ is the artificial noise used to facilitate differential privacy. In this regard, the received signal at the edge server becomes 
\begin{align}\label{eq:rec_analog}
   \textstyle y^{(t)} = \sum_{k=1}^K h_k^{(t)} (\alpha_k^{(t)} (p_k^{(t)} + n_k^{(t)}))+z^{(t)},
\end{align}
where $\alpha_k^{(t)}$ is the scaling factor. 
\subsubsection{{\color{black} \emph{Sign-pAirZero}}}
{\color{black} To accommodate digital modulation, which is more similar to modern communication design, we {\color{black} utilize a one-bit compression on the gradient projection for simplicity.}} We then design the sending signal to} \begin{align}
x_k^{(t)}=\alpha_k^{(t)} (\text{sign}\big\{\sum_{k=1}^K \text{sign}\{p_k^{(t)}\}\big\} + n_k^{(t)}), \label{eq:tx_digital}
\end{align} 
and thus the received signal is given as 
\begin{align}\label{eq:rec_digital}
     \textstyle y^{(t)} = \sum_{k=1}^K h_k^{(t)} (\alpha_k^{(t)} (\text{sign}\{p_k^{(t}\} + n_k^{(t)}))+z^{(t)},
\end{align}

Suppose $h_k^{(t)}\alpha_k^{(t)}=c^{(t)}$, then for both cases, the effective noise is $\sum_{k=1}^K c^{(t)}n_k^{(t)}+z^t$, whose standard deviation is given as 
\begin{align}
   \textstyle m^{(t)} =\big((c^{(t)})^2 \sum_{k=1}^K (\sigma_k^{(t)})^2 + N_0\big)^{1/2}.\label{eq:m}
\end{align}

\begin{remark}[{\color{black} Communication efficiency}]
    The communication efficacy of \emph{pAirZero} is two-fold. Firstly, the above 1-bit compression seems to share a similar form with signSGD \cite{bernstein2018signsgd}. However, it should be noted that signSGD compresses each local gradient element-wise. For this reason, the communication overhead scales with the ambient dimension. In contrast, both \emph{pAirZero} and \emph{Sign-pAirZero} admit an $\mathcal{O}(1)$ communication load independent of the ambient model dimension. {\color{black} This results in magnitudes of order lower communication overhead, and will alleviate the stringent communication synchronization often assumed in previous OTA computation works, making OTA computation more realistic for resource-constrained clients.} Secondly, note that OTA computation allows all users to transmit on the same resource block, the latency is independent of the number of participating clients. 
\end{remark}

\subsubsection{{\color{black} Model Updates}}\label{subsubsec:model updates}
Each client updates the model vector via $\boldsymbol{w}^{(t+1)}=\boldsymbol{w}^{(t)}-\eta \cdot \hat{p}^{(t)}\boldsymbol{z}^{(t)}$, where $\eta$ is the learning rate.

\subsection{{\color{black} Privacy-preserving Gradient Sharing}}
Differential privacy imposes a point-wise upper bound on the divergence between the distributions $\mathbb{P}(\boldsymbol{y} | \mathcal{D})$ and $\mathbb{P}(\boldsymbol{y} | \mathcal{D}')$, conditioned on the use of either one of two ``neighboring" global data sets $\mathcal{D}$ and $\mathcal{D}'$. The formal definition is given as follows. 

\begin{definition}[Differential Privacy (DP) \cite{dwork2006differential}] The learning process is $(\epsilon, \delta)$-differentially private if for any two possible adjacent global data sets $\mathcal{D}' = \cup_{k=1}^K \mathcal{D}'_k$ and $\mathcal{D}'' = \cup_{k=1}^K \mathcal{D}_k''$ where there exists only one client indexed $j$ satisfying $\|\mathcal{D}'_j - \mathcal{D}_j''\|_1 = 1$ and $\|\mathcal{D}'_k - \mathcal{D}_k''\|_1 = 0$ for any $k \neq j$, it holds that \begin{align}\label{eq:dp}
    \mathbb{P}(y | \mathcal{D}') \leq \exp(\epsilon) \mathbb{P}(y | \mathcal{D}'') + \delta,
\end{align}
where $\|\cdot \|_1$ is the Hamming distance \cite{waggener1995pulse}, $\epsilon \in (0, \infty)$ and $\delta \in [0, 1)$ are two constants characterizing the privacy budget. If \cref{eq:dp} holds, then the differential privacy loss, defined as 
\begin{align}\label{eq:dploss}
    \textstyle\mathcal{L}_{\mathcal{D}', \mathcal{D}''}(y) = \ln \frac{\mathbb{P}(y | \mathcal{D}')}{\mathbb{P}(y | \mathcal{D}'')}.
\end{align}
satisfies 
\begin{align}\label{new-eq:dploss}
     \mathbb{P}(\textstyle|\mathcal{L}_{\mathcal{D}', \mathcal{D}''}(y)| < \epsilon) >1-\delta. 
\end{align}
\end{definition}
Setting a sufficiently small $\epsilon$ and $\delta$ will forbid any adversary who knows all other data samples in the data set from identifying the remaining individual from the observed outputs.

\begin{lemma}[Privacy Loss Bound]\label{lm:privacy}
    The learning process of \emph{pAirZero}, equipped with analog modulation, is $(\epsilon, \delta)$-DP if \begin{align}
        \textstyle\sum_{t=1}^T\big(\sqrt{2} c^{(t)} \gamma^{(t)}/m^{(t)}\big)^2 \leq R_{\mathsf{dp}}(\epsilon, \delta), \ \forall k,
    \end{align}
    where \begin{align}
        R_{\mathsf{dp}}(\epsilon, \delta) = \big(\sqrt{\epsilon + [C^{-1}(1/\delta)]^2} - C^{-1}(1 / \delta)\big) ^ 2,
    \end{align}
    $C(x)$ is a function defined as $C(x)= \sqrt{\pi} x e^{x^2}$, and $C^{-1}$ is the inverse function of $C(x)$.
\end{lemma}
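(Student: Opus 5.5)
The plan is the standard Gaussian-mechanism privacy-loss argument used for OTA-DP, following Liu and Simeone's analysis cited in the related works. The quantity $\gamma^{(t)}$ is read as the $\ell_1$ sensitivity of a single client's gradient projection: $|p_k^{(t)}(\mathcal{D}') - p_k^{(t)}(\mathcal{D}'')| \le 2\gamma^{(t)}$ when the two datasets differ in one sample of client $k$. This would follow from clipping or boundedness of $p_k^{(t)}$ by $\gamma^{(t)}$.

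\textbf{Single-round privacy loss.} Fix neighboring datasets $\mathcal{D}', \mathcal{D}''$ that differ only at client $j$. Under channel inversion $h_k^{(t)}\alpha_k^{(t)} = c^{(t)}$, the received signal is
$$y^{(t)} = c^{(t)}\sum_k p_k^{(t)} + \text{noise},$$
where the noise is Gaussian with standard deviation $m^{(t)}$ from \cref{eq:m}. The two conditional distributions of $y^{(t)}$ are therefore Gaussians with common variance $(m^{(t)})^2$ and means differing by $\Delta^{(t)} = c^{(t)}(p_j'-p_j'')$, with $|\Delta^{(t)}| \le 2c^{(t)}\gamma^{(t)}$.

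Writing $y^{(t)}$ relative to the mean under $\mathcal{D}''$, the log-ratio is
$$\mathcal{L}^{(t)} = \frac{(\Delta^{(t)})^2 + 2\Delta^{(t)} z}{2(m^{(t)})^2}, \qquad z\sim\mathcal{N}(0,(m^{(t)})^2).$$
So $\mathcal{L}^{(t)}$ is Gaussian with mean $(\Delta^{(t)})^2/(2(m^{(t)})^2)$ and variance $(\Delta^{(t)})^2/(m^{(t)})^2$. Taking the worst case, $|\Delta^{(t)}|/m^{(t)} \le 2c^{(t)}\gamma^{(t)}/m^{(t)}$.

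\textbf{Composition over rounds.} Conditioned on the seeds and past outputs, the noises are independent across rounds. The total loss $\mathcal{L} = \sum_t \mathcal{L}^{(t)}$ is then Gaussian with mean $\rho/2$ and variance $\rho$, where
$$\rho = \sum_t (\Delta^{(t)}/m^{(t)})^2 \le \sum_t \bigl(\sqrt{2}\,c^{(t)}\gamma^{(t)}/m^{(t)}\bigr)^2 \cdot 2.$$
The constant must be matched to the normalization of $\gamma^{(t)}$ and to the complex-noise convention: with circular complex noise, only the real part carries the signal, and the per-dimension variance is $(m^{(t)})^2/2$. This is where the $\sqrt{2}$ comes from. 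I would fix the convention so that $\rho = \sum_t(\sqrt{2}c^{(t)}\gamma^{(t)}/m^{(t)})^2 \le R_{\mathsf{dp}}$.

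\textbf{Tail bound and solving for $R_{\mathsf{dp}}$.} Next, require $\mathbb{P}(|\mathcal{L}|\ge\epsilon) \le \delta$. Standardizing, this reduces to $\mathbb{P}(|Z| \ge (\epsilon - \rho/2)/\sqrt{\rho})$ up to a one-sided or two-sided choice. Apply the Gaussian tail bound $Q(x) \le e^{-x^2}/(\sqrt{\pi}\,x)$ in the form used by Liu and Simeone, which gives $\delta \ge 1/C(x)$. The condition becomes $x \ge C^{-1}(1/\delta)$ with $x = (\epsilon - \rho/2)/\sqrt{2\rho}$, since $C(x)=\sqrt{\pi}xe^{x^2}$ is increasing on $x>0$.

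Setting $x = (\epsilon-\rho/2)/\sqrt{2\rho}$ gives the inequality $\epsilon - \rho/2 \ge C^{-1}(1/\delta)\sqrt{2\rho}$. Treat this as a quadratic in $\sqrt{\rho/2}$. Its positive root is exactly
$$\sqrt{\rho/2} \le \sqrt{\epsilon + [C^{-1}(1/\delta)]^2} - C^{-1}(1/\delta),$$
which yields the stated $R_{\mathsf{dp}}$ after absorbing the factor of $2$ into the $\sqrt{2}$ normalization. Monotonicity then shows that any smaller $\rho$ also satisfies the condition.

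\textbf{Main obstacle.} The main obstacle is bookkeeping rather than conceptual: matching the constants. The factor $\sqrt{2}$, the real-versus-complex noise variance, the definition of the sensitivity $\gamma^{(t)}$, and the use of a one-sided versus two-sided tail must all be chosen consistently with the exact form of $R_{\mathsf{dp}}$. A second point to check is that composition is valid even though the rounds are adaptive. The worst-case sensitivity bound holds conditionally at every round, so the sum of conditional Gaussian losses is still dominated by the stated Gaussian.
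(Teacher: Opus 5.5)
Your route is the paper's: a per-round Gaussian log-likelihood ratio with sensitivity $2c^{(t)}\gamma^{(t)}$, summed over rounds into a loss with mean $\rho/2$ and variance $\rho=\sum_t (v^{(t)}/m^{(t)})^2$. This is followed by the Liu--Simeone Mills-ratio bound $\mathbb{P}(|\mathcal{L}|>\epsilon)\le 1/C(q)$ with $q=(\epsilon-\rho/2)/\sqrt{2\rho}$, and then the quadratic in $\sqrt{\rho/2}$. The paper leaves that quadratic implicit; you carry it out correctly. Your final chain is right, but one piece of your constant bookkeeping is wrong and should be removed.

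\textbf{Origin of the $\sqrt{2}$.} Your explanation via circular complex noise is backwards. With real noise of standard deviation $m^{(t)}$, as in the paper's proof, the requirement is $\rho/2\le R_{\mathsf{dp}}(\epsilon,\delta)$. Since $\rho/2\le \sum_t (2c^{(t)}\gamma^{(t)}/m^{(t)})^2/2=\sum_t(\sqrt{2}c^{(t)}\gamma^{(t)}/m^{(t)})^2$, that is the entire source of the $\sqrt{2}$, exactly as in your last paragraph.

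Under a circular complex convention (per-dimension variance $(m^{(t)})^2/2$), the per-round loss would instead have mean $(\Delta^{(t)})^2/(m^{(t)})^2$ and variance $2(\Delta^{(t)})^2/(m^{(t)})^2$. That doubles $\rho$ and forces the strictly stronger condition $\sum_t(2c^{(t)}\gamma^{(t)}/m^{(t)})^2\le R_{\mathsf{dp}}(\epsilon,\delta)$. So adopting that convention, as you propose to ``fix'', would not give the lemma as stated.

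\textbf{Factor-of-2 inconsistency.} For the same reason, your identification $\rho=\sum_t(\sqrt{2}c^{(t)}\gamma^{(t)}/m^{(t)})^2$ with target $\rho\le R_{\mathsf{dp}}(\epsilon,\delta)$ is off by a factor $2$. It is $\rho/2$ that this sum bounds, and the identification conflicts with your own correct bound $\rho\le 2\sum_t(\sqrt{2}c^{(t)}\gamma^{(t)}/m^{(t)})^2$.

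\textbf{Tail bound (minor).} The bound you actually use is $\mathbb{P}(|Z|\ge\sqrt{2}x)=\mathrm{erfc}(x)\le e^{-x^2}/(\sqrt{\pi}x)$, not $Q(x)\le e^{-x^2}/(\sqrt{\pi}x)$, which is false for large $x$. The paper obtains it as $2\cdot\frac{\sigma}{s\sqrt{2\pi}}e^{-s^2/(2\sigma^2)}$ with $\sigma=\sqrt{\rho}$ and $s=\epsilon-\rho/2$.

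Your remark on adaptive composition goes beyond the paper, which simply sums the per-round terms as if $v^{(t)}$ were fixed.
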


\begin{proof}
    See Appendix \ref{pf:lm:privacy}.
\end{proof}

{\color{black}
\begin{remark}[Privacy efficiency]
Unlike \cite{liu2020privacy}, we privatize the gradient projection rather than the gradient itself, which is a long vector. While privatizing on a long vector results in a completely random vector of the same size as the gradient, leading to significant computation overhead per iteration, privatizing on only a scalar is convenient for resource-constrained client devices.
\end{remark}
}

{\color{black}\begin{algorithm}[t]
\caption{The proposed \emph{pAirZero} and \emph{Sign-pAirZero}}\label{alg:cap}
\begin{algorithmic}[1]
\Ensure Trained model $\boldsymbol{w}^{(T)}$.
\State Initialize $\boldsymbol{w}^{(0)}$, clients pull $\boldsymbol{w}^{(0)}$

\For {$t = 1, \dots, T$}
\State clients estimate \cref{eq:projgrad} in parallel
\If {Analog}
\State Determine $c^{(t)}$ and $\sigma_k^{(t)}$ from Theorem \ref{thm:1};
\State client sends \cref{eq:tx_analog};
\State edge server receives by \cref{eq:rec_analog};
\Else
\State Determine $c^{(t)}$ and $\sigma_k^{(t)}$ from Theorem \ref{thm:2};
\State client sends \cref{eq:tx_digital};
\State edge server receives by \cref{eq:rec_digital};
\EndIf
\State edge server estimate $\hat{p}^{(t)}$ by \cref{eq:estp} and broadcast it to all clients;
\State Each client updates $\boldsymbol{w}^{(t+1)} \gets \boldsymbol{w}^{(t)} - \eta \hat{p}^{(t)} \boldsymbol{z}^{(t)}$;
\EndFor
\end{algorithmic}
\end{algorithm}}

We summarize the algorithm in pseudo-code, as shown in Algorithm \ref{alg:cap}.


\section{Convergence and Optimality Gap Analysis}\label{convergence-pairzero}

{\color{black} To verify the feasibility of the algorithm design given the vastly reduced communication and memory overhead with an efficient privacy mechanism, we conduct theoretical analyses on the convergence for the proposed \emph{pAirZero} and \emph{Sign-pAirZero}.}Regarding the loss function, we make the following widely adopted assumptions. 

\begin{assumption}[Gradient Lipschitz continuity, \cite{bottou2018optimization}]\label{ass:lsmooth}
The gradient of the global loss function $F(\boldsymbol{w})$ is assumed to globally $L$-Lipschitz continuous, namely, 
\begin{align}
    \|\nabla F(\boldsymbol{w}) - \nabla F(\boldsymbol{w}')\|_2^2 \leq L \|\boldsymbol{w} - \boldsymbol{w}'\|_2^2, \ \forall \boldsymbol{w}, \boldsymbol{w}' \in \mathbb{R}^d
\end{align}
\end{assumption}


\begin{assumption}[Polyak-\L ojaciewicz property, \cite{polyak1964gradient}]\label{ass:pl}
The global loss function $F(\boldsymbol{w})$ is said to have the Polyak-\L ojaciewicz property if it holds
\begin{align}
    \|\nabla F(\boldsymbol{w})\|_2^2 \geq 2 M (F(\boldsymbol{w}) - F^*), \ \forall \boldsymbol{w} \in \mathbb{R}^d,
\end{align}
where $F^*$ is the global optimum and $M \in (0, \infty)$ is a constant. 
\end{assumption}


\begin{assumption}[Sample-wise bounded gradient projection]\label{ass:boundgradproj} 
In iteration $t$, the gradient projection $p_k^{(t)}(\boldsymbol{u}, v)$ computed at any training sample $(\boldsymbol{u}, v)\in \mathcal{D}_k$ is assumed to be smaller than a constant $\gamma^{(t)}_k > 0$. For convenience, we also define $\gamma^{(t)}\triangleq \max{\gamma^{(t)}_k}$.
\end{assumption}

\begin{assumption}[Local $r$-effective rank, \cite{malladi2023fine}]\label{ass:r_effective_rank}
Define $G(\boldsymbol w^{(t)}) \triangleq \max_{(\boldsymbol{u}, v)\in \mathcal{D}} ||\nabla F(\boldsymbol w^{(t)};\boldsymbol{u}, v)||_2$. The global loss function is said to have local $r$-effective rank if, for any $\boldsymbol{w}^t$, there exists an effective rank-$r$ matrix $\boldsymbol{H}_{\boldsymbol w^{(t)}} \preccurlyeq L\cdot \boldsymbol{I}_d$ such that $\nabla^2 F(\boldsymbol w) \leq \boldsymbol{H}_{\boldsymbol w^{(t)}}$ holds for $\forall \boldsymbol{\boldsymbol w}\in\{\boldsymbol{w} \ | \ ||\boldsymbol w - \boldsymbol w^{(t)}||_2 \leq \eta \cdot d \cdot G(\boldsymbol w^{(t)})\}$, where the effective rank is defined as the value of $\text{tr} (\boldsymbol{H}_{\boldsymbol w^{(t)}})/\|\boldsymbol{H}_{\boldsymbol w^{(t)}}\|_{\text{op}}$. 
\end{assumption}

\begin{assumption}[Unbiased batch gradient with finite variance, \cite{bottou2018optimization}]\label{ass:finite_batch_variance}
The batch gradient is unbiased, and the variance of the FO batch gradient estimation is finite, specifically, 
\begin{align}
    \mathbb{E}[\nabla F(\boldsymbol{w}; \mathcal{B})] &= \nabla F(\boldsymbol{w}),\\
    \mathbb{E}[\|\nabla F(\boldsymbol{w};\mathcal{B})\|_2^2] &= \|\nabla F(\boldsymbol{w})\|_2^2 + \text{tr} (\boldsymbol{\Sigma}) /b,
\end{align}
where $b$ is the batch size of $\mathcal{B}$, $\boldsymbol{\Sigma}$ represents the covariance matrix of the true gradient $\nabla F$.
\end{assumption}

In the above, the gradient Lipschitz continuity as well as the Polyak-\L ojaciewicz property are widely assumed in federated learning, as they are key to proving the convergence of gradient-based optimization methods. The bounded gradient projection assumption is crucial for avoiding infinite privacy loss in a single iteration. At last, the local $r$-effective rank and finite batch gradient variance assumptions are widely used in the analysis ZO when fine-tuning an LLM. In particular, the local $r$-effective rank assumption ensures that the convergence speed of ZO is independent of the ambient dimension, which is essential; otherwise, the ZO gradient would be unable to achieve a sufficient decrease in the objective function of the LLM.

The global convergence behavior of \emph{pAirZero} with analog modulation is summarized as the following theorem: 

\begin{theorem}[Optimality Gap Bound of \emph{pAirZero}]\label{lm:convergence}
    Under \cref{ass:lsmooth,ass:pl,ass:r_effective_rank}, with a sufficiently small $\eta$ detailed in the proof, after $T$ iterations, the expected optimality gap is upper bounded as \begin{align}\label{eq:ogap_a}
        \textstyle \mathbb{E}[G^{(T)}] \leq A^T G^{(0)} + \sum_{t=1}^T \frac{\eta^2 L S O_r (m^{(t)})^2}{2 b A^{t - T} (K c^{(t)})^2},
    \end{align}
    where $S:= \max_t \text{tr}( \Sigma^{(t)})$, $G^{(t)} = F(\boldsymbol{w}^{t}) - F^*$ represents the optimality gap at iteration $t$, $O_r$ is the corrected low-rank factor of the model, $b$ is batch size, and $A = 1 - M \eta$ is the contraction factor.
\end{theorem}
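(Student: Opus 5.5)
We follow the standard one-step descent argument for ZO-SGD under a local effective-rank assumption, in the style of \cite{malladi2023fine}, and treat the over-the-air noise as an extra additive term in the update.

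\textbf{Step 1: decompose the update.} Write the update of \cref{subsubsec:model updates} as $\boldsymbol{w}^{(t+1)}=\boldsymbol{w}^{(t)}-\eta\hat{p}^{(t)}\boldsymbol{z}^{(t)}$. Under $h_k^{(t)}\alpha_k^{(t)}=c^{(t)}$ and \cref{eq:estp},
\begin{align*}
\hat{p}^{(t)}=\bar p^{(t)}+e^{(t)},\qquad \bar p^{(t)}=\tfrac1K\textstyle\sum_k p_k^{(t)},\qquad e^{(t)}=\frac{\sum_k c^{(t)}n_k^{(t)}+z^{(t)}}{Kc^{(t)}}.
\end{align*}
Here $e^{(t)}$ is zero-mean, independent of $\boldsymbol{z}^{(t)}$ and of the data, and by \cref{eq:m} has variance $(m^{(t)})^2/(Kc^{(t)})^2$. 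So the step equals the noiseless ZO-SGD step plus a perturbation $-\eta e^{(t)}\boldsymbol{z}^{(t)}$.

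\textbf{Step 2: local descent via the effective-rank assumption.} By \cref{ass:r_effective_rank}, the step stays in the ball where $\nabla^2F\preccurlyeq\boldsymbol{H}_{\boldsymbol{w}^{(t)}}$. A second-order Taylor bound then gives
\begin{align*}
F(\boldsymbol{w}^{(t+1)})\le F(\boldsymbol{w}^{(t)})-\eta\hat p^{(t)}\langle\nabla F,\boldsymbol z^{(t)}\rangle+\tfrac{\eta^2}{2}(\hat p^{(t)})^2\,\boldsymbol z^{(t)\top}\boldsymbol H\boldsymbol z^{(t)}.
\end{align*}

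\textbf{Step 3: take expectations.} We take expectations over $\boldsymbol{z}^{(t)}$, the mini-batch, and $e^{(t)}$.
\begin{itemize}
\item The cross term with $e^{(t)}$ vanishes because $e^{(t)}$ is zero-mean and independent.
\item The linear term gives $-\eta\|\nabla F\|^2$, up to an $O(\mu)$ bias that we drop as $\mu\to0$, as is standard.
\item The quadratic term splits into $\mathbb{E}[(\bar p^{(t)})^2\boldsymbol z^\top\boldsymbol H\boldsymbol z]$ and $\mathbb{E}[(e^{(t)})^2]\,\mathbb{E}[\boldsymbol z^\top\boldsymbol H\boldsymbol z]$.
\end{itemize}
For the first piece we use the Gaussian fourth-moment identity. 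Together with \cref{ass:finite_batch_variance}, this yields a bound of the form $L O_r(\|\nabla F\|^2+\mathrm{tr}\boldsymbol\Sigma/b)$. Here $O_r$ collects $\frac{dr+d-2}{n(d+2)}+1$-type factors from \cite{malladi2023fine}, after bounding $\mathrm{tr}\boldsymbol H\le rL$. For the second piece, $\mathbb{E}[\boldsymbol z^\top\boldsymbol H\boldsymbol z]=\mathrm{tr}\boldsymbol H\le rL$, which gives a term proportional to $\eta^2 L\,(m^{(t)})^2/(Kc^{(t)})^2$.

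\textbf{Step 4: recursion.} Choose $\eta$ small enough, say $\eta\le 1/(L O_r)$, so that the gradient-norm terms combine to at most $-\tfrac{\eta}{2}\|\nabla F\|^2$. Applying \cref{ass:pl} then gives
\begin{align*}
\mathbb{E}[G^{(t+1)}]\le(1-M\eta)\,\mathbb{E}[G^{(t)}]+\text{noise term}_t .
\end{align*}
Unrolling this over $T$ iterations produces $A^TG^{(0)}$ plus a sum of noise terms, each weighted by $A^{T-t}$. This matches $1/A^{t-T}$ in \cref{eq:ogap_a}.

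\textbf{Main obstacle.} The hardest step is making the noise term match the stated product form $\eta^2 L S O_r (m^{(t)})^2/(2b(Kc^{(t)})^2)$. The channel-noise variance and the batch-variance term $S/b$ arise as separate additive contributions in Step 3, so they do not naturally multiply. To obtain the stated form, we will bound both contributions by their product. This requires normalizing so that the relevant factors (e.g.\ $S/b$ and $(m^{(t)})^2/(Kc^{(t)})^2$) are at least $1$, or absorbing constants into $O_r$. We will state this absorption explicitly rather than hide it.
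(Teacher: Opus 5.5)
\textbf{Genuine gap: the final step cannot be closed.} Your skeleton matches the paper's: the descent inequality, the factor $O_r$ from \cite{malladi2023fine}, \cref{ass:finite_batch_variance}, a step size leaving $-\frac{\eta}{2}\|\nabla F(\boldsymbol{w}^{(t)})\|_2^2$, then \cref{ass:pl} and unrolling. Your Steps 1--3 are also the correct computation. The channel noise is zero-mean and independent, so it enters additively: $\mathbb{E}[(\hat p^{(t)})^2\mid \bar p^{(t)}]=(\bar p^{(t)})^2+Y^{(t)}$ with $Y^{(t)}\triangleq (m^{(t)})^2/(Kc^{(t)})^2$. The per-step residual you get is therefore of the form
\begin{align*}
\frac{\eta^2 L O_r S}{2b}+\frac{\eta^2 L r}{2}\,Y^{(t)}.
\end{align*}
The problem is your last step. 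Dominating this by $\frac{\eta^2 L S O_r}{2b}Y^{(t)}$ requires an extra factor of at least $1/Y^{(t)}+rb/(O_r S)$. That factor is unbounded as the SNR grows (e.g.\ $\sigma_k^{(t)}=0$ and $c^{(t)}$ large). $O_r=\frac{dr+2d}{d+2}$ is a fixed constant and cannot absorb it, and you cannot ``normalize'' $S/b$ or $Y^{(t)}$, since both are given quantities. In fact, as $Y^{(t)}\to 0$ the right-hand side of \cref{eq:ogap_a} tends to $A^T G^{(0)}$. The bound would then make the steady-state error of constant-step minibatch ZO-SGD vanish with the channel noise even though $S>0$. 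A one-dimensional quadratic with noisy sample gradients already violates this, so the step cannot be closed under the stated hypotheses.

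\textbf{How the paper gets the product form.} It uses two steps that your computation rightly avoids. First, in \cref{eq:second_line} it inflates $\|\hat{\nabla} F(\boldsymbol{w}^{(t)};\mathcal{B})\|_2^2$ multiplicatively by $(1+Y^{(t)})$; this is valid only when $(\bar p^{(t)})^2\ge 1$. Second, after applying \cref{ass:finite_batch_variance} it keeps only the cross term $Y^{(t)}S/b$ and drops $\frac{L\eta^2 O_r}{2}\frac{\mathrm{tr}(\boldsymbol{\Sigma}^{(t)})}{b}$, which comes from the ``$1$''.

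\textbf{What you can actually prove.} Your argument establishes the additive bound. To recover \cref{eq:ogap_a} up to a factor $2$, add explicit hypotheses such as $Y^{(t)}\ge 1$ and $S/b\ge 1$. Then use $r\le O_r$ (which holds as $r\le d$) and $a+b\le 2ab$ for $a,b\ge 1$. The first hypothesis follows from the DP constraint of \cref{lm:privacy} whenever $2(\gamma^{(t)})^2\ge K^2 R_{\mathsf{dp}}(\epsilon,\delta)$.

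\textbf{A secondary issue, also glossed over in the paper.} Your Step 2 claim that the step stays in the ball of \cref{ass:r_effective_rank} is not automatic. $\hat p^{(t)}$ carries unbounded Gaussian channel noise and $\boldsymbol{z}^{(t)}$ is Gaussian. You need a truncation or high-probability argument; otherwise global smoothness puts $d$ rather than $r$ in front of $Y^{(t)}$.
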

\begin{proof}
    See Appendix~\ref{pf:lm:convergence}.
\end{proof}

In the above, since $A$ is smaller than $1$, $A^{T}G^{(0)}$ vanishes with a linear rate. Besides, the second term in the right-hand side of \cref{eq:ogap_a} is upper-bounded by a constant number (since it is a sum of a geometric decaying sequence whose common ratio is $A$). This theorem indicates that \emph{pAirZero} converges linearly to a neighborhood of the global optimal point, provided that $m^{(t)}$ and $c^{(t)}$ are upper bounded. In fact, the linear convergence rate is the best result that can be obtained with the PL condition. In the next section, we will also show how to minimize the neighborhood term by tuning the per-client transmit power and the strength of the artificial noise. 

Unlike analog modulation, where the desired signals are directly aggregated, digital modulation relies on a majority voting scheme to reach a consensus. This leads to a completely different convergence analysis, which is summarized in the following theorem: 


\begin{theorem}[Optimality Gap Bound for \emph{Sign-pAirZero}]\label{lm:convergence-digital}
Under \Cref{ass:lsmooth,ass:pl,ass:boundgradproj,ass:finite_batch_variance}, with a sufficiently small $\eta$ detailed in the proof, after $T$ iterations, the expected optimality gap is upper-bounded as
\begin{align}
    &\textstyle \mathbb{E}[G(\boldsymbol{w}^{(T)})]\leq \textstyle \tilde{A}^T G^{(0)} + \sum_{t=1}^T \tilde{A}^{T-t}\big(\theta \cdot (e^{(t)})^2 + r\big)
    \label{theorem-inequa}
\end{align}
where $\tilde{A} = 1 - \frac{M \eta^2}{\theta \pi}$ is the contraction factor, and $e^{(t)}$ is the total reversed sign probability at iteration $t$.
\end{theorem}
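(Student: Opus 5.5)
The plan is to follow the standard one-step descent argument for signSGD-type methods under $L$-smoothness, adapted to the fact that the broadcast scalar is a (possibly flipped) sign of the aggregate gradient projection and multiplies the random direction $\boldsymbol{z}^{(t)}$. Conditioned on $\boldsymbol{w}^{(t)}$, the update is $\boldsymbol{w}^{(t+1)}=\boldsymbol{w}^{(t)}-\eta\, s^{(t)}\boldsymbol{z}^{(t)}$, where $s^{(t)}=\hat p^{(t)}$ is the decoded majority-vote sign. Write $s^{(t)}=\text{sign}(\langle\nabla F(\boldsymbol{w}^{(t)}),\boldsymbol{z}^{(t)}\rangle)$ plus an error event. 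By \cref{ass:lsmooth},
\begin{align}
F(\boldsymbol{w}^{(t+1)})\le F(\boldsymbol{w}^{(t)})-\eta\, s^{(t)}\langle\nabla F(\boldsymbol{w}^{(t)}),\boldsymbol{z}^{(t)}\rangle+\tfrac{L\eta^2}{2}\|\boldsymbol{z}^{(t)}\|_2^2 .
\end{align}
The last term is the source of the additive constant $r$, after taking expectation and using $\mathbb{E}\|\boldsymbol{z}\|^2=d$, or an effective-rank refinement.

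For the correct-sign event, $s^{(t)}\langle\nabla F,\boldsymbol{z}\rangle=|\langle\nabla F,\boldsymbol{z}\rangle|$. Since $\langle\nabla F,\boldsymbol{z}\rangle\sim\mathcal{N}(0,\|\nabla F\|^2)$, we have $\mathbb{E}|\langle\nabla F,\boldsymbol{z}\rangle|=\sqrt{2/\pi}\,\|\nabla F\|$. On the flip event, which has probability $e^{(t)}$ (aggregating flips from batch noise, from the artificial noise $n_k^{(t)}$, and from channel noise, so $e^{(t)}$ depends on $m^{(t)}/c^{(t)}$), the descent term reverses sign. By Cauchy--Schwarz it costs at most $2\eta\,\mathbb{E}[\mathbf{1}_{\text{flip}}|\langle\nabla F,\boldsymbol{z}\rangle|]\le 2\eta\sqrt{e^{(t)}}\,\|\nabla F\|$. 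The descent is therefore of order $\eta\|\nabla F\|(\sqrt{2/\pi}-2\sqrt{e^{(t)}})$.

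To turn a first-power gradient norm into a contraction, I would apply Young's inequality with a free parameter $\theta$. Lower-bound $\eta\sqrt{2/\pi}\|\nabla F\|\ge \frac{\eta^2}{\theta\pi}\|\nabla F\|^2\cdot c$, absorbing constants, where $\theta$ is chosen so that $\eta\|\nabla F\|/\theta$ is small under the bounded-gradient-projection \cref{ass:boundgradproj}. The flip penalty is bounded as $2\eta\sqrt{e}\|\nabla F\|\le \frac{\eta^2}{2\theta\pi}\|\nabla F\|^2+c'\theta (e^{(t)})^2$-type terms; the exact power of $e^{(t)}$ depends on how the Cauchy--Schwarz and Young steps are split, and I would aim for $(e^{(t)})^2$ to match the statement. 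Then \cref{ass:pl} gives $-\frac{\eta^2}{\theta\pi}\|\nabla F\|^2\le -\frac{2M\eta^2}{\theta\pi}G^{(t)}$. This yields the recursion $\mathbb{E}G^{(t+1)}\le \tilde A\,\mathbb{E}G^{(t)}+\theta(e^{(t)})^2+r$ with $\tilde A=1-\frac{M\eta^2}{\theta\pi}$, and unrolling it over $T$ steps gives \cref{theorem-inequa}.

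The main obstacle is controlling the flip probability rigorously. The majority vote over $K$ noisy signs, with heterogeneous local projections, requires bounding $\mathbb{P}(\text{sign}(\hat p^{(t)})\ne\text{sign}(\langle\nabla F,\boldsymbol{z}\rangle))$. Here I would use \cref{ass:finite_batch_variance} with Chebyshev/Gauss inequalities to bound the per-client batch flip probability, and Gaussian tail bounds in $c^{(t)}/m^{(t)}$ for the channel and privacy noise. The requirement that $\eta$ be small enough that $\sqrt{2/\pi}-2\sqrt{e^{(t)}}$ leaves positive descent is where the ``sufficiently small $\eta$'' condition enters. I would keep $e^{(t)}$ as an abstract quantity in the theorem and defer its explicit form to the optimization section.
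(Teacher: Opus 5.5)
Your skeleton is the paper's: a smoothness expansion, a split on whether the broadcast sign is flipped, $\mathbb{E}|\boldsymbol{z}^\top\nabla F(\boldsymbol{w}^{(t)})|=\sqrt{2/\pi}\,\|\nabla F(\boldsymbol{w}^{(t)})\|_2$, Young's inequality with a free $\theta$, PL, and unrolling. The step you leave open is where your version diverges, and as sketched it does not work. Write $g=\|\nabla F(\boldsymbol{w}^{(t)})\|_2$ and $e=e^{(t)}$. After Cauchy--Schwarz the flip cost is $2\eta\sqrt{e}\,g$. This cannot be bounded \emph{in isolation} by $\theta e^2+C\eta^2g^2/\theta$ for any constant $C$. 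At $\eta g=\theta\sqrt{e}/C$ the left side is $2\theta e/C$, while the right side is $\theta e^2+\theta e/C$, which is smaller whenever $e<1/C$. Such an $\eta g$ is not excluded by a step-size condition of the form $\eta g\lesssim\theta$. The natural Young split gives $\theta e$ instead, which is weaker than the theorem since $e\ge e^2$. The paper avoids this by making the flip penalty \emph{linear} in $e$. It writes the expected descent as $(1-2e)\,\mathbb{E}_{\boldsymbol z}|\boldsymbol{z}^\top\nabla F(\boldsymbol{w}^{(t)})|$, so the penalty is $2\eta e\sqrt{2/\pi}\,g$. Young's inequality with $a=e$ and $b=\eta\sqrt{2/\pi}\,g$ then gives exactly $\theta e^2+\frac{2\eta^2}{\theta\pi}g^2$. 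The condition $\eta\le\frac{2^{3/2}\pi^{1/2}\theta}{5g}$ lets the descent $\eta\sqrt{2/\pi}\,g$ absorb this together with $\frac{\eta^2}{2\theta\pi}g^2$, and PL turns the latter into $\frac{M\eta^2}{\theta\pi}G^{(t)}$.

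That factorization quietly treats the flip event as independent of $|\boldsymbol z^\top\nabla F(\boldsymbol w^{(t)})|$, so your caution is reasonable. Your route can be completed without that assumption if you let the descent term absorb the flip cost instead of bounding it alone. Split on $e$:
If $e\le 1/(8\pi)$, then $2\sqrt{e}\le\frac12\sqrt{2/\pi}$. The net descent is at least $\eta g/\sqrt{2\pi}$, which dominates $\frac{\eta^2}{2\theta\pi}g^2$ once $\eta g\le\sqrt{2\pi}\,\theta$, so no $e$-term is needed.
If $e>1/(8\pi)$, Young with weight $\theta e$ gives $2\eta\sqrt{e}\,g\le\theta e^2+\frac{\eta^2g^2}{\theta e}<\theta e^2+\frac{8\pi\eta^2g^2}{\theta}$. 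This is absorbed by $\eta\sqrt{2/\pi}\,g$ once $\eta g\le\theta\sqrt{2/\pi}/(8\pi+\frac{1}{2\pi})$.
Either way you get $\mathbb{E}[G^{(t+1)}]\le\tilde A\,G^{(t)}+\theta e^2+(\text{quadratic term})$, under a step-size condition of the same form as the paper's.

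Two smaller corrections. First, positivity of $\sqrt{2/\pi}-2\sqrt{e^{(t)}}$ is a condition on $e^{(t)}$, not on $\eta$, and neither route needs it. Second, the small-$\eta$ conditions are $\eta\lesssim\theta/\|\nabla F(\boldsymbol w^{(t)})\|_2$ and a bound making the quadratic term at most $r$. The first is imposed directly, not derived from the bounded-projection assumption, which bounds finite-difference projections rather than $\|\nabla F\|_2$. For the second, the paper uses the effective-rank Hessian, $\mathbb{E}[\boldsymbol z^\top\boldsymbol H_{\boldsymbol w^{(t)}}\boldsymbol z]=\operatorname{tr}(\boldsymbol H_{\boldsymbol w^{(t)}})\le Lr$, with $\eta\le\sqrt{2/L}$. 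Your $\frac{L\eta^2 d}{2}$ instead needs $\eta\le\sqrt{2r/(Ld)}$.
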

\begin{proof}
    See Appendix~\ref{proof:convergence-digital}.
\end{proof}

Different from the analog case, the quantity $e^{(t)}$ does not have an explicit expression. Nevertheless, we can derive an upper bound for $e^{(t)}$, as shown in the following lemma. 
\begin{lemma}[Upper Bound of $e^{(t)}$]\label{upper-bound-et}
Define $e^{(t)}$ as the sign reversing probability, namely, 
\begin{align}\label{eq:def_et}
    e^{(t)} \triangleq \mathbb{P}( \hat{p}^{(t)} \nabla F(\boldsymbol{w}^{(t)})^\top \boldsymbol{z}^{(t)} < 0 ).
\end{align}
Assuming that the probability of the sign of the batch gradient projection estimator differs from that of the true gradient projection is no larger than $e_0$, namely, 
\begin{align}\label{eq:e_0}
    e_k^{(t)} \triangleq \mathbb{P}(p_k^{(t)} \nabla F(\boldsymbol{w}^{(t)})^\top \boldsymbol{z}^{(t)} < 0) \leq e_0.
\end{align}
Also assume that $e_0\leq 1/2$. Then we have 
\begin{align}
    \textstyle (e^{(t)})^2 \leq \frac{4Ke_0(1-e_0) + \frac{(m^{(t)})^2}{(c^{(t)})^2}}{4Ke_0(1-e_0) + \frac{(m^{(t)})^2}{(c^{(t)})^2} + K^2(1-2e_0)^2}
    \label{eq:e-upper-bound}
\end{align}
provided that $0 < e_k^{(t)} < 1/2$.
\end{lemma}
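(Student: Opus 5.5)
The bound has the form $\frac{V}{V+\mu^2}$, which is what Cantelli's one-sided Chebyshev inequality gives for a random variable with mean $\mu$ and variance $V$. The plan is to identify the right scalar, check its mean and variance, and apply Cantelli.

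\textbf{Setup.} Fix iteration $t$ and condition on $\boldsymbol{w}^{(t)}$ and $\boldsymbol{z}^{(t)}$. Without loss of generality assume $\nabla F(\boldsymbol{w}^{(t)})^\top\boldsymbol{z}^{(t)}>0$; the other case follows by flipping signs. Write $s_k=\text{sign}\{p_k^{(t)}\}\in\{-1,+1\}$. By \cref{eq:e_0}, $\mathbb{P}(s_k=-1)=e_k^{(t)}\le e_0$. Under the alignment $h_k^{(t)}\alpha_k^{(t)}=c^{(t)}$, the received signal \cref{eq:rec_digital} divided by $c^{(t)}$ is
\begin{align*}
X \triangleq \frac{y^{(t)}}{c^{(t)}}=\sum_{k=1}^K s_k+\frac{1}{c^{(t)}}\Big(\sum_{k=1}^K c^{(t)}n_k^{(t)}+z^{(t)}\Big).
\end{align*}
The effective noise term has zero mean and variance $(m^{(t)})^2/(c^{(t)})^2$ by \cref{eq:m}. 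Since $\hat p^{(t)}=X/K$, the event in \cref{eq:def_et} is exactly $\{X<0\}$, so $e^{(t)}=\mathbb{P}(X<0)$.

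\textbf{Mean and variance.} I will treat the $s_k$ as independent across clients, since they come from independent local batches given $\boldsymbol{z}^{(t)}$, and as independent of the channel and artificial noise. Then
\begin{align*}
\mathbb{E}[X]=\sum_k(1-2e_k^{(t)}),\qquad \mathrm{Var}(X)=\sum_k 4e_k^{(t)}(1-e_k^{(t)})+\frac{(m^{(t)})^2}{(c^{(t)})^2}.
\end{align*}
Because $0<e_k^{(t)}\le e_0\le 1/2$, each term $1-2e_k^{(t)}$ is at least $1-2e_0$, so $\mathbb{E}[X]\ge K(1-2e_0)\ge 0$. The map $x\mapsto 4x(1-x)$ is increasing on $[0,1/2]$, so $\mathrm{Var}(X)\le 4Ke_0(1-e_0)+(m^{(t)})^2/(c^{(t)})^2$.

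\textbf{Cantelli step.} Cantelli's inequality gives
\begin{align*}
\mathbb{P}(X<0)=\mathbb{P}\big(X-\mathbb{E}X< -\mathbb{E}X\big)\le \frac{\mathrm{Var}(X)}{\mathrm{Var}(X)+(\mathbb{E}X)^2}.
\end{align*}
The right-hand side is increasing in $\mathrm{Var}(X)$ and decreasing in $(\mathbb{E}X)^2$. Substituting the two bounds from the previous step therefore yields the right-hand side of \cref{eq:e-upper-bound} as an upper bound on $e^{(t)}$.

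\textbf{Passing to $(e^{(t)})^2$.} The lemma states the bound for $(e^{(t)})^2$. This follows immediately, because $e^{(t)}\in[0,1]$ implies $(e^{(t)})^2\le e^{(t)}$.

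\textbf{Main obstacle.} The delicate point is justifying the independence and zero-mean assumptions behind the variance formula, and the reduction to a real-valued statistic. The noise is complex-valued, so the server must use $\mathrm{Re}(y^{(t)})/c^{(t)}$. Only the real part of the noise then enters, and its variance is at most $(m^{(t)})^2/(c^{(t)})^2$, so the bound still holds. Independence of the $s_k$ across clients is natural given independent local data. If that independence were unavailable, I would fall back on $\mathrm{Var}(\sum_k s_k)\le\big(\sum_k\sqrt{\mathrm{Var}(s_k)}\big)^2$, but that gives a weaker bound than the one claimed, so I will rely on independence.
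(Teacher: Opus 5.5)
Your proof is correct and follows the paper's argument. Both apply Cantelli's inequality to the aggregated sign statistic: you use $y^{(t)}/c^{(t)}$ and the paper uses $\hat p^{(t)}=y^{(t)}/(Kc^{(t)})$, which gives the same ratio. Both then compute a binomial-type mean and variance, use monotonicity to pass from $e_k^{(t)}$ to $e_0$, and finish with $(e^{(t)})^2\le e^{(t)}$. The only difference is that the paper assumes i.i.d.\ clients with a common $e_k^{(t)}$, whereas you bound the mean and variance termwise, which also covers heterogeneous $e_k^{(t)}$ at no extra cost.
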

\begin{proof}
See Appendix \ref{pf:upper-bound-et}.
\end{proof}

\begin{remark}
Although it is difficult to theoretically justify the validity of $e_0\leq 1/2$, numerical experiments indicate this is indeed the case, see \cref{srp}. 
\end{remark}

{\color{black} Noticing that for both cases, the global loss converges linearly to the optimal point with a bounded neighborhood, whose size is a summation on a geometric series with their common ratio no larger than $1$.}

With the derivation of the convergence bound, it turns out that the optimality gap bound is dominated by both the effective noise $m^{(t)}$ and channel gain $c^{(t)}$. {\color{black} However, the answer to the following question remains unclear: how can we achieve optimal performance by carefully setting the per-iteration transmission power and artificial noise scale while avoiding constraint violations?}

\section{{\color{black} Closing Optimality Gap via Power Control}}
\label{optmini}
{\color{black} Towards building a communication-memory-privacy-efficient system for LLM FFT, we incorporated artificial noise and channel noise }into the uploading process to facilitate DP protection {\color{black} on the gradient projection admitted by ZO optimization}. To this aim, based on the convergence bound provided in \cref{convergence-pairzero}, we build optimization problems for optimality gap minimization by tuning $c^{(t)}$ and $\sigma_k^{(t)}$ dynamically.

\subsection{\color{black} Power Allocation Optimization for \emph{pAirZero}}
Ignoring the constants, the optimality gap bound of \cref{eq:ogap_a} yields
\begin{align}
\textstyle \sum_{t=1}^T A^{-t} \big(\sum_{k=1}^K (\sigma_k^{(t)})^2 + \frac{N_0}{(c^{(t)})^2}\big)
\end{align}
The bound minimization problem with DP and per-client transmit power constraints is formulated as 
\begin{align}
    &\textstyle \min\limits_{\{c_k^{(t)}, \sigma_k^{(t)}\}} \sum_{t=1}^T A^{-t} \big(\sum_{k=1}^K (\sigma_k^{(t)})^2 + \frac{N_0}{(c^{(t)})^2}\big) \tag{P1}\label{eq:P1} \\ 
    &\textstyle\text{s.t.} \quad \sum_{t=1}^T \frac{2 (\gamma^{(t)})^2}{\sum_{k=1}^K (\sigma_k^{(t)})^2 + \frac{N_0}{(c^{(t)})^2}} \leq R_{\mathsf{dp}}(\epsilon, \delta), \tag{C1} \label{cons-C1-ana} \\
    &\textstyle {\color{white} \text{s.t.} \quad}\Big(\frac{c^{(t)}}{h_k^{(t)}}\Big)^2\left((\gamma_k^{(t)})^2 + d(\sigma_k^{(t)})^2\right) \leq P, \ \forall k, t. \tag{C2}\label{cons-power-ana}
\end{align}
where the first constraint is equivalent to the DP constraint, and the second constraint is the power constraint. Recall that the original power constaint is $\mathbb{E}\big[\|x_k^{(t)}\|^2\big] \leq P$, and (\ref{cons-power-ana}) is obtained by invoking $x_k^{(t)}=\alpha_k^{(t)} (p_k^{(t)} + n_k^{(t)})$. The following lemma explains the equivalence between (\ref{cons-C1-ana}) and the DP constraint.

Despite the complex form of the problem (\ref{eq:P1}), we show that the optimal solution to this problem can be obtained in closed form. As such, the per-client transmit power and the strength of the artificial noise can be conveniently tuned in each iteration. 

\begin{theorem}[{\color{black} Closed-form Solution to \ref{eq:P1}}]\label{thm:1}
    The optimal solution to \eqref{eq:P1} is given as follows:
    \begin{itemize}
        \item If condition \begin{align}
           \textstyle \frac{2P}{K N_0} \sum_{t=1}^T \min_k \{(h_k^{(t)})^2\} < R_{\mathsf{dp}}(\epsilon, \delta)
        \end{align}
        holds true, then \begin{align}
            c^{(t)^*} = \min_k \{P^{\frac12} h_k^{(t)} (\gamma_k^{(t)})^{-1}\}, \sigma_k^{(t)^*} = 0,
        \end{align}
        is the unique solution to \eqref{eq:P1}, which means that the client with the poorest channel condition transmits the uncoded projection with full power $P$.
        \item Otherwise, \eqref{eq:P1} admits non-unique solutions, among which the solution that minimizes the transmit power among all clients is given as 
        \begin{align}\label{eq:solutiona}
          \textstyle  c^{(t)^*} = \min \Big\{ \underbrace{\textstyle\frac{A^{- \frac{t}{4}} N_0^{\frac12}}{(2 \zeta^*)^{\frac14} (\gamma^{(t)})^{\frac12}}}_{\text{adaptive term}} , \min_k \big\{\frac{P^{\frac12} h_k^{(t)}}{\gamma_k^{(t)}}\big\}\Big\}, \sigma_k^{(t)^*} = 0.
        \end{align}
    \end{itemize}
    where the value of $\zeta^{*}$ can be obtained by bisection search so as to meet the DP constraint, namely, 
    \begin{align}\label{eq25}
    \textstyle  \sum_{t=1}^T 2(\gamma^{(t)})^2 \min \Big\{\frac{A^{-\frac{t}{2}}}{(2 \zeta)^{\frac12}\gamma^{(t)}},  \min_k\big\{\frac{P (h_k^{(t)})^2}{N_0(\gamma_k^{(t)})^2}\big\}\Big\} = R_{\mathsf{dp}}(\epsilon, \delta).
\end{align}

\end{theorem}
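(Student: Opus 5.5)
The plan is to collapse the two families of decision variables $\{c^{(t)},\sigma_k^{(t)}\}$ into a single per-iteration scalar, namely the normalized effective noise
\begin{align*}
u_t \triangleq \textstyle\sum_{k=1}^K (\sigma_k^{(t)})^2 + \frac{N_0}{(c^{(t)})^2} = \frac{(m^{(t)})^2}{(c^{(t)})^2}.
\end{align*}
In terms of $u_t$, the objective of \eqref{eq:P1} becomes the linear function $\sum_t A^{-t}u_t$, and \eqref{cons-C1-ana} becomes $\sum_t 2(\gamma^{(t)})^2/u_t \le R_{\mathsf{dp}}(\epsilon,\delta)$. This constraint is convex and decreasing in each $u_t$.

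The first step is to characterize exactly which values of $u_t$ are attainable under \eqref{cons-power-ana}.
\begin{itemize}
\item From \eqref{cons-power-ana}, $c^{(t)} \le \min_k P^{1/2}h_k^{(t)}\big((\gamma_k^{(t)})^2+d(\sigma_k^{(t)})^2\big)^{-1/2}$. Consequently $u_t \ge \ell_t \triangleq \max_k N_0(\gamma_k^{(t)})^2/(P(h_k^{(t)})^2)$, and this lower bound is attained with $\sigma_k^{(t)}=0$.
\item Increasing any $\sigma_k^{(t)}$ only tightens the bound on $c^{(t)}$. It therefore never enlarges the attainable set of $u_t$.
\item Conversely, every $u_t \ge \ell_t$ is reachable with $\sigma_k^{(t)}=0$, simply by setting $c^{(t)}=(N_0/u_t)^{1/2}\le \min_k P^{1/2}h_k^{(t)}/\gamma_k^{(t)}$.
\end{itemize}
Hence the feasible set in $u_t$ is exactly $[\ell_t,\infty)$. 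The original problem is equivalent to the convex program of minimizing $\sum_t A^{-t}u_t$ subject to $\sum_t 2(\gamma^{(t)})^2/u_t\le R_{\mathsf{dp}}$ and $u_t\ge \ell_t$. The non-uniqueness claimed in the second case comes from the fact that the same $u_t$ can be realized by several pairs $(c^{(t)},\sigma_k^{(t)})$. Among these, $\sigma_k^{(t)}=0$ with the largest admissible $c^{(t)}$ uses the least power, which is the representative the theorem reports.

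The second step is the case split.
\begin{itemize}
\item \textbf{First case.} The objective is strictly increasing in each $u_t$. If the point $u_t=\ell_t$ for all $t$ already satisfies the DP constraint, it is the unique minimizer. The corresponding $c^{(t)}=\min_k P^{1/2}h_k^{(t)}/\gamma_k^{(t)}$ and $\sigma_k^{(t)}=0$ are forced, since $u_t=\ell_t$ requires equality in the power bound. Substituting $u_t=\ell_t$ into \eqref{cons-C1-ana} gives exactly the displayed condition: it uses $2(\gamma^{(t)})^2/\ell_t$, which reduces to the stated $\frac{2P}{KN_0}\min_k (h_k^{(t)})^2$-type expression under the paper's normalization of $\gamma$. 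I would verify this bookkeeping carefully, since the stated condition appears to absorb the $\gamma$ factors.
\item \textbf{Second case.} Otherwise the DP constraint must be active, so I would write the KKT conditions with multiplier $\zeta>0$. Stationarity gives $A^{-t} = 2\zeta(\gamma^{(t)})^2/u_t^2$ whenever $u_t>\ell_t$. This yields $u_t^*=\max\{\ell_t,\ (2\zeta)^{1/2}\gamma^{(t)}A^{t/2}\}$, and therefore $c^{(t)*}=(N_0/u_t^*)^{1/2}$, which is precisely the minimum in \eqref{eq:solutiona}.
\end{itemize}
Convexity of the program, together with Slater's condition (take all $u_t$ large), makes the KKT conditions sufficient as well as necessary.

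The last step is determining $\zeta^*$. Complementary slackness requires $\sum_t 2(\gamma^{(t)})^2/u_t^*(\zeta)=R_{\mathsf{dp}}$, which is \eqref{eq25}. The left-hand side is continuous and nonincreasing in $\zeta$. As $\zeta\to0$ it tends to the value at $u=\ell$, which exceeds $R_{\mathsf{dp}}$ in this case, and as $\zeta\to\infty$ it tends to $0$. A root therefore exists, it can be found by bisection, and it is unique on the strictly decreasing branch.

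I expect the main obstacle to be the first step: rigorously showing that artificial noise is never beneficial, so that the reduction to $u_t\ge\ell_t$ is exact. The subtlety is the factor $d$ multiplying $(\sigma_k^{(t)})^2$ in \eqref{cons-power-ana} and the coupling of all clients through the common $c^{(t)}$. My first approach would be a direct exchange argument: given any feasible $(c,\sigma)$, replace it by $\sigma=0$ and $c'=(N_0/u_t)^{1/2}\ge c$. Feasibility of the new point then follows from $u_t\ge N_0/c^2$, and both the objective and the DP constraint are unchanged.
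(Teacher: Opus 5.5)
Your argument is correct in substance, but it takes a different route from the paper. The paper omits this proof and defers to its proof of Theorem~\ref{thm:2}. That proof changes variables to $m_k^{(t)}=(\sigma_k^{(t)})^2+N_0/(c^{(t)})^2$ and $n_k^{(t)}=(h_k^{(t)})^2/(c^{(t)})^2$, keeps both, and writes the full KKT system with multipliers for the DP, power and $(\sigma_k^{(t)})^2\ge 0$ constraints. It then extracts the power floor on $m$ from compatibility of the last two constraints, solves the stationarity equation, and takes $m=\max\{\cdot,\cdot\}$. Only at the end does it push $n_k^{(t)}$ to the right end of its feasible interval, which amounts to $\sigma_k^{(t)}=0$.

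You instead eliminate the artificial noise up front with an exchange argument. The problem then collapses to a linear objective in $u$ with one convex coupling constraint and box constraints $u_t\ge\ell_t$. This has several advantages:
\begin{itemize}
\item convexity and Slater's condition are immediate;
\item the coupling of all clients through the common $c^{(t)}$ is handled exactly by the $\max_k$ in $\ell_t$, which the paper's per-client KKT treatment leaves implicit;
\item uniqueness in the first case and non-uniqueness in the second are explained transparently;
\item your $u_t^*=\max\{\ell_t,(2\zeta)^{1/2}\gamma^{(t)}A^{t/2}\}$ reproduces \eqref{eq:solutiona} and \eqref{eq25} exactly.
\end{itemize}
The paper's route mainly buys a single template shared with Theorem~\ref{thm:2}.

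Your hedge about the first-case threshold is warranted, but the conclusion is wrong: the bookkeeping cannot close, and no normalization of $\gamma$ fixes it. Substituting $u_t=\ell_t$ into \eqref{cons-C1-ana} gives $\sum_{t=1}^T 2(\gamma^{(t)})^2/\ell_t\le R_{\mathsf{dp}}(\epsilon,\delta)$. For $\gamma_k^{(t)}\equiv\gamma^{(t)}$ this reads $\frac{2P}{N_0}\sum_t\min_k\{(h_k^{(t)})^2\}\le R_{\mathsf{dp}}(\epsilon,\delta)$, with no factor $1/K$. Nothing in \eqref{eq:P1} produces that factor. Your threshold is the $\zeta\to 0$ limit of the left side of \eqref{eq25}, and it is the threshold the paper itself uses in its proof of Theorem~\ref{thm:2} ($\zeta=0$ if $\sum_t 2(h_k^{(t)})^2P/N_0<R_{\mathsf{dp}}$).

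With the $1/K$, the first bullet as printed is false whenever
\[
\tfrac{2P}{KN_0}\textstyle\sum_t\min_k\{(h_k^{(t)})^2\}<R_{\mathsf{dp}}(\epsilon,\delta)<\tfrac{2P}{N_0}\textstyle\sum_t\min_k\{(h_k^{(t)})^2\}.
\]
In that range the claimed point violates \eqref{cons-C1-ana}. So say explicitly that you prove the first bullet with the corrected threshold $\sum_t 2(\gamma^{(t)})^2/\ell_t\le R_{\mathsf{dp}}(\epsilon,\delta)$. Your second-case analysis still covers the second bullet as printed, because its hypothesis forces $\sum_t 2(\gamma^{(t)})^2/\ell_t\ge K R_{\mathsf{dp}}(\epsilon,\delta)$.

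There is also a direction slip in the exchange argument. Since $u_t\ge N_0/c^2$, you have $c'=(N_0/u_t)^{1/2}\le c$, not $\ge c$. It is exactly $c'\le c$ that gives $(c'/h_k^{(t)})^2(\gamma_k^{(t)})^2\le (c/h_k^{(t)})^2\big((\gamma_k^{(t)})^2+d(\sigma_k^{(t)})^2\big)\le P$. For the same reason, the power-minimizing representative is $\sigma=0$ with the \emph{smallest} $c$ realizing $u_t^*$, not the largest. Every realization has $c\ge (N_0/u_t^*)^{1/2}$ and $(\gamma_k^{(t)})^2+d(\sigma_k^{(t)})^2\ge(\gamma_k^{(t)})^2$. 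Hence $\sigma=0$, $c=(N_0/u_t^*)^{1/2}$ minimizes every client's power simultaneously.
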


\begin{proof}
    The proof of this theorem is very similar to that of Theorem \ref{eq:P2}, therefore, we omit it. 
\end{proof}

\subsection{\color{black} Power Allocation Optimization for \emph{Sign-pAirZero}}
Similar to the analog case, by ignoring the constants, the right-hand side of \cref{theorem-inequa}, we can construct the following optimality gap bound minimization problem: 
\begin{align}
    &\textstyle\min\limits\nolimits_{\{c_k^{(t)}, \sigma_k^{(t)}\}} \ \sum_{t=1}^T \tilde{A}^{-t} \eta \theta (e^{(t)})^2 \label{eq:P2}\tag{P2} \\ 
    &\textstyle \text{s.t.} \quad \sum_{t=1}^T \frac{2}{\sum_{k=1}^K (\sigma_k^{(t)})^2 + \frac{N_0}{(c^{(t)})^2}} \leq R_{\mathsf{dp}}(\epsilon, \delta), \tag{C3}\label{eq:C3} \\
    &{\color{white}\textstyle \text{s.t.} \quad}\big(c^{(t)}/h_k^{(t)}\big)^2(1 + d(\sigma_k^{(t)})^2) \leq P, \quad\forall k, t. \tag{C4}\label{eq:C4}
\end{align}

In the above problem, the equivalence between the constraint (\ref{eq:C3}) and the DP-constraint is shown in the following lemma, and the power constraint follows immediately by realizing that $\gamma = 1$.
\begin{lemma}[Privacy Loss Bound for \emph{Sign-pAirZero}]\label{lm:privacy-digital}
The learning process of \emph{pAirZero}, equipped with digital modulation, is \((\epsilon, \delta)\)-DP if, for each device $k$,
\begin{equation}
\textstyle \sum_{t=1}^{T} \frac{2(c^{(t)})^2}{ (c^{(t)})^2 \sum_{k=1}^K (\sigma_k^{(t)})^2 + N_0 } \le R_{\mathsf{dp}}(\epsilon, \delta)
\end{equation}
\end{lemma}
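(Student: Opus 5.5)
The plan is to reduce the statement to the analog privacy bound of \cref{lm:privacy}, treating each iteration of \emph{Sign-pAirZero} as a Gaussian mechanism on a scalar, and then to compose over $T$ iterations.

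\textbf{Step 1: per-round output and sensitivity.} In iteration $t$ the server observes $y^{(t)}$, and by \cref{eq:estp} equivalently $\hat p^{(t)}$. With $h_k^{(t)}\alpha_k^{(t)}=c^{(t)}$ we can write
\begin{align*}
y^{(t)} = c^{(t)}\textstyle\sum_{k=1}^K \text{sign}\{p_k^{(t)}\} + \Big(\textstyle\sum_{k} c^{(t)} n_k^{(t)} + z^{(t)}\Big).
\end{align*}
Here the effective noise is Gaussian with standard deviation $m^{(t)}$, as in \cref{eq:m}. Consider two neighboring data sets that differ in one sample of client $j$. Only $\text{sign}\{p_j^{(t)}\}$ can change, and it moves between $\pm1$. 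The $\ell_2$-sensitivity of the noiseless part of $y^{(t)}$ is therefore at most $2c^{(t)}$, irrespective of the magnitude of $p_j^{(t)}$. This is exactly the analog setting with $\gamma^{(t)}$ replaced by $\gamma=1$, since the sign quantizer is itself bounded by $1$. For this reason \cref{ass:boundgradproj} is not needed here.

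\textbf{Step 2: privacy loss per round.} For Gaussian outputs $\mathcal N(\mu,m^2)$ versus $\mathcal N(\mu',m^2)$, the privacy loss \cref{eq:dploss} evaluated at $y$ is itself Gaussian. Its mean is $\Delta^2/(2m^2)$ and its variance is $\Delta^2/m^2$, where $\Delta=|\mu-\mu'|\le 2c^{(t)}$. The magnitude of the mean shift matches the analog case with sensitivity $2c^{(t)}\gamma^{(t)}$ at $\gamma^{(t)}=1$, which gives the per-round quantity $(\sqrt2 c^{(t)}/m^{(t)})^2 = 2(c^{(t)})^2/(m^{(t)})^2$. I would take the constant convention directly from the proof of \cref{lm:privacy} so that the factors agree.

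\textbf{Step 3: composition and tail bound.} Conditioned on the past, the rounds use independent fresh noise. The total privacy loss is therefore a sum of independent Gaussians, and it is Gaussian with mean $\rho/2$ and variance $\rho$, where $\rho=\sum_t 2(c^{(t)})^2/(m^{(t)})^2$. Adaptivity is harmless: the model $\boldsymbol w^{(t)}$ and the seeds are post-processing of past outputs, so the per-round loss depends on the past only through the mean shift, which is bounded uniformly.

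The last step is to require $\mathbb P(|\mathcal L|\ge\epsilon)\le\delta$, as in \cref{new-eq:dploss}. Using the Gaussian tail $Q(x)\le e^{-x^2}/(\sqrt\pi x)$-type bound in the form $C(x)=\sqrt\pi x e^{x^2}$, it suffices that $(\epsilon-\rho/2)/\sqrt{2\rho}\ge C^{-1}(1/\delta)$. Solving this quadratic in $\sqrt\rho$ yields $\rho\le R_{\mathsf{dp}}(\epsilon,\delta)$. Substituting $m^{(t)}$ from \cref{eq:m} gives precisely the stated condition. This condition is uniform in the client index, so it holds for every $k$.

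The main obstacle I expect is Step 1. The transmitted payload in \cref{eq:tx_digital} is written as a sign of a sum of signs, whereas the received signal \cref{eq:rec_digital} uses the individual $\text{sign}\{p_k^{(t)}\}$. I would adopt the received-signal form, in which each client contributes its own sign. Under that form the sensitivity is $2c^{(t)}$, not something larger, and a single client's change cannot propagate. The constant bookkeeping, $2$ versus $\sqrt2$, I would match exactly to the analog lemma rather than rederive it.
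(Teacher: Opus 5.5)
Your route is the paper's: its entire proof is that each client transmits only a sign, so one sets $\gamma=1$ in \cref{lm:privacy}. That is exactly your Step 1 (sensitivity $2c^{(t)}$) followed by the composition and Mills-bound argument of that lemma.

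There is one bookkeeping error in Step 3. With $\Delta_t=2c^{(t)}$, your $\rho=\sum_t 2(c^{(t)})^2/(m^{(t)})^2$ is the \emph{mean} of the composed privacy loss, and its variance is $2\rho$, not $\rho/2$ and $\rho$ as you wrote. The tail condition is then $(\epsilon-\rho)/(2\sqrt{\rho})\ge C^{-1}(1/\delta)$, which solves to exactly $\rho\le R_{\mathsf{dp}}(\epsilon,\delta)$. Your stated condition $(\epsilon-\rho/2)/\sqrt{2\rho}\ge C^{-1}(1/\delta)$ actually solves to $\rho\le 2R_{\mathsf{dp}}(\epsilon,\delta)$, not $\rho\le R_{\mathsf{dp}}(\epsilon,\delta)$. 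Your final condition comes out right only because the two factor-of-$2$ slips cancel.
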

\begin{proof}
  Since clients send only a sign, setting $\gamma = 1$ in \cref{lm:privacy} yields the desired result.
\end{proof}

At last, we can also derive a closed-form solution to problem (\ref{eq:P2}), see the following theorem. 
\begin{theorem}[{\color{black} Closed-form Solution to \ref{eq:P2}}]
\label{thm:2}
The optimal solution to problem (\ref{eq:P2}) is given as follows:
    \begin{itemize}
        \item If condition
\begin{align}
\textstyle \min_k\sum_{t=1}^T     \frac{2(h_k^{(t)})^2P}{KN_0} \le R_{\mathsf{dp}}(\epsilon, \delta)
\end{align}
holds, then 
\begin{align}
            c^{(t)^*} = \min_k \{P^{\frac12} h_k^{(t)}\}, \ \sigma_k^{(t)^*} = 0,
\end{align}
is the unique optimal solution to problem (\ref{eq:P2}), which means that the client with the poorest channel condition transmits the uncoded gradient projection with full power $P$.
        \item Otherwise, problem (\ref{eq:P2}) admits non-unique optimal solutions, among which the solution that minimizes the transmit power among all clients is given as  \begin{flalign}\label{eq:solutiond}
    & c^{(t)^*} =\textstyle \min \Big\{ \min_k \{P^{\frac12} h_k^{(t)}\}, & \notag\\
    &\textstyle\quad \underbrace{\textstyle \frac{N_0^\frac{1}{2}\left(2 \left( \tilde{A}^{-t}B_2^2 - 2\zeta^* \right)\right)^\frac{1}{2}}{\left(\left( B_1 + B_2 \right) \left( 4\zeta^* + \sqrt{8 \tilde{A}^{-t} B_2^2 \zeta^*} \right)\right)^\frac{1}{2}}}_{\text{adaptive term}}\Big\}, \sigma_k^{(t)^*} = 0.&
\end{flalign}
where the value of $\zeta^*$ can be obtained by bisection so as to satisfy the DP constraint, namely, \begin{align}
        &\textstyle \sum_{t=1}^T 2 \min \Big\{\min_k\{P (h_k^{(t)})^2\},\notag\\
&\textstyle \frac{2 \left( \tilde{A}^{-t}B_2^2 - 2\zeta \right)}{\left( B_1 + B_2 \right) \left( 4\zeta + \sqrt{8 \tilde{A}^{-t} B_2^2 \zeta} \right)}\Big\} =R_{\mathsf{dp}}(\epsilon, \delta).
    \end{align}
\end{itemize}
\end{theorem}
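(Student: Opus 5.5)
We will reduce (P2) to a separable problem in one scalar per iteration and then solve it by Lagrangian duality.

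\textbf{Step 1: change of variables and zero artificial noise.} Set $\beta^{(t)} \triangleq \sum_{k}(\sigma_k^{(t)})^2 + N_0/(c^{(t)})^2 = (m^{(t)})^2/(c^{(t)})^2$. By \cref{upper-bound-et}, the objective depends on the decision variables only through $\beta^{(t)}$. Write $B_1 \triangleq 4Ke_0(1-e_0)$ and $B_2 \triangleq K^2(1-2e_0)^2$, which is the notation the statement suggests. Then the surrogate of $(e^{(t)})^2$ is $\phi(\beta) = (B_1+\beta)/(B_1+B_2+\beta)$, which is increasing and concave in $\beta$. Constraint (\ref{eq:C3}) reads $\sum_t 2/\beta^{(t)} \le R_{\mathsf{dp}}$, and it also depends only on $\beta^{(t)}$.

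For a fixed target $\beta^{(t)}$, (\ref{eq:C4}) is easiest to satisfy when $\sigma_k^{(t)}=0$. A smaller $\sigma_k$ allows a smaller $c^{(t)}$, and the privacy and objective are unchanged as long as $N_0/(c^{(t)})^2$ absorbs the difference. Any noise mass can therefore be shifted into the $N_0/c^2$ term by lowering $c$, which only relaxes the power constraint. This gives $\sigma_k^{(t)*}=0$ and reduces the feasible set to $\beta^{(t)} \ge N_0/(P\min_k (h_k^{(t)})^2)$. It also explains the non-uniqueness claim: several $(c,\sigma)$ pairs yield the same $\beta$, and the one with $\sigma=0$ uses the least power.

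\textbf{Step 2: case split.} If the DP constraint already holds at the smallest admissible $\beta^{(t)}$, i.e.\ $c^{(t)}$ at full power for every $t$, then that point is optimal because $\phi$ is increasing. This is the first bullet. The stated condition should follow from checking (\ref{eq:C3}) at $c^{(t)}=\min_k P^{1/2}h_k^{(t)}$. I will have to reconcile the factor $K$ and the placement of $\min_k$ with the way $N_0$ enters; I expect this to be bookkeeping.

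Otherwise the DP constraint is active. We then form the Lagrangian $\sum_t \tilde A^{-t}\phi(\beta^{(t)}) + \zeta(\sum_t 2/\beta^{(t)} - R_{\mathsf{dp}})$ with the box constraints $\beta^{(t)}\ge \underline\beta^{(t)}$.

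\textbf{Step 3: per-iteration stationarity (main obstacle).} Setting the derivative to zero gives
\[
\tilde A^{-t} B_2/(B_1+B_2+\beta)^2 = 2\zeta/\beta^2 .
\]
Taking square roots makes this linear in $\beta$: $\beta\big(\tilde A^{-t/2}B_2^{1/2} - (2\zeta)^{1/2}\big) = (2\zeta)^{1/2}(B_1+B_2)$. Rationalizing should produce the stated form with $\tilde A^{-t}B_2^2 - 2\zeta$ and $4\zeta+\sqrt{8\tilde A^{-t}B_2^2\zeta}$, where $B_2$ enters squared as in the paper's parametrization. I will match constants carefully at this point.

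The hard part is that $\phi$ is concave, so the problem is not obviously convex in $\beta$. I would first substitute $u=1/\beta$, so that the constraint becomes linear, $\sum_t 2u^{(t)}\le R_{\mathsf{dp}}$. In $u$, the objective $\tilde A^{-t}(B_1u+1)/((B_1+B_2)u+1)$ is a decreasing convex function of $u$, since it equals $1 - B_2u/((B_1+B_2)u+1)$. The problem then becomes convex, KKT is sufficient, and the stationary point is the unique optimum in $u$.

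The solution is then $u^{(t)} = \min\{\text{stationary value}, \min_k P(h_k^{(t)})^2/N_0\}$, which gives the min-structure of $c^{(t)*}$ after $c^{(t)}=(N_0 u^{(t)})^{1/2}$. The left-hand side of the DP equation is monotone in $\zeta$, so complementary slackness together with monotonicity yields a unique $\zeta^*$ found by bisection. Uniqueness of $u$ gives a unique $c^{(t)*}$ under $\sigma=0$.
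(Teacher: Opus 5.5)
Your route is correct and cleaner than the paper's. The paper keeps the artificial noise inside the optimization. It passes to $m_k^{(t)}=(\sigma_k^{(t)})^2+N_0/(c^{(t)})^2$ and $n_k^{(t)}=(h_k^{(t)})^2/(c^{(t)})^2$, asserts the resulting problem is convex, and writes a Lagrangian with multipliers for the DP, power and nonnegativity constraints. It then uses the KKT system to kill the power multiplier once $m_k^{(t)}\ge N_0/((h_k^{(t)})^2P)$, solves a quadratic for $m_k^{(t)}$ and clips it. Only at the end does it place $n_k^{(t)}$ at the top of its feasible interval, which is the choice $\sigma_k^{(t)}=0$.

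You instead eliminate $\sigma$ up front by a dominance argument, reducing to one scalar per iteration with a box constraint. Your substitution $u=1/\beta$ supplies something the paper lacks. In $m$ the objective $1-B_2/(B_1+B_2+m)$ is concave, so the paper's convexity claim fails and its KKT point is only shown to be stationary. In $u$ the objective is strictly convex and all constraints are linear. That gives KKT sufficiency, uniqueness of the optimal effective noise level, and the monotonicity in $\zeta$ that justifies bisection.

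What the paper's $(m,n)$ bookkeeping buys is an explicit description of the whole optimal set, namely the interval of admissible $n_k^{(t)}$. You get non-uniqueness by perturbation instead, which needs strict power slack at some $t$, i.e.\ $\beta^{(t)*}>\underline{\beta}^{(t)}$. That holds in the second case because full power violates (\ref{eq:C3}). In the first case, $\beta^{(t)*}=\underline{\beta}^{(t)}$ for all $t$ forces $\sigma=0$, which gives the uniqueness claim.

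On the constants you deferred: your (correct) stationarity condition is $\tilde{A}^{-t}B_2\beta^2=2\zeta(B_1+B_2+\beta)^2$. It reproduces the stated form with $B_2^2$ exactly after rescaling the multiplier, $\zeta'=B_2\zeta$. This is harmless because $\zeta^*$ is defined only through the bisection equation. It is not that $B_2$ enters squared under some parametrization; the paper's $B_2^2$ comes from writing the derivative of $-B_2/(B_1+B_2+m)$ as $(B_2/(B_1+B_2+m))^2$. Your choice $B_1=4Ke_0(1-e_0)$, $B_2=K^2(1-2e_0)^2$ is the one consistent with \cref{upper-bound-et}; the appendix swaps them and drops a square.

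The first-bullet condition, however, is not bookkeeping. Checking (\ref{eq:C3}) at full power gives $\sum_t 2P\min_k(h_k^{(t)})^2/N_0\le R_{\mathsf{dp}}(\epsilon,\delta)$, with no factor $1/K$ and with $\min_k$ inside the sum. The paper's own complementary-slackness step has no $K$ either. With the extra $1/K$ the first bullet would be false in general, since full power could then violate (\ref{eq:C3}). State the condition you actually prove.

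Two smaller fixes follow from the same computation. The first entry of the $\min$ in the bisection equation should read $P\min_k(h_k^{(t)})^2/N_0$. The per-iteration solution should be $u^{(t)}=\max\{0,\min\{u^{(t)}_{\mathrm{stat}},P\min_k(h_k^{(t)})^2/N_0\}\}$. When $2\zeta\ge\tilde{A}^{-t}B_2$ the stationarity equation has no positive root, which can happen for small $t$ when $R_{\mathsf{dp}}(\epsilon,\delta)$ is tight.
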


\begin{proof}
    See \cref{pf:P3}.
\end{proof}

\section{Simulations}\label{sec:sim}
In this section, we provide simulation results to demonstrate the superiority of the proposed \emph{pAirZero} and \emph{Sign-pAirZero}. All the experiments are conducted on the OPT-125M model. In particular, we demonstrate that both \emph{pAirZero} and \emph{Sign-pAirZero} achieve comparable performance to non-DP cases and consistently outperform the baselines. All empirical results were obtained on a machine equipped with an AMD EPYC 7742 64-Core Processor and four NVIDIA A100-SXM4-80GB GPUs.

\subsection{Experimental Settings}




\begin{table}[t]
\centering
\caption{Learning rate grid search, selected learning rates are \textbf{bolded}.}\label{table:lr_grid}
\begin{tabular}{@{}rc@{}}
\toprule
        & $\eta$                             \\ \midrule
\emph{pAirZero}  & $\{1e-7, 3e-7, \mathbf{5e-7}, 1e-6, 5e-6\}$ \\
\emph{Sign-pAirZero} & $\{1e-6, 5e-6, 1e-5, \mathbf{3e-5}, 5e-5\}$ \\ \bottomrule
\end{tabular}
\end{table}

We employ OPT-125M \cite{zhang2022opt}, a well-recognized light-weight LLM. The number of clients is assumed to be $K=5$. We include two classic language tasks for comparison: 1. SST-2\cite{socher2013recursive} (Stanford Sentiment Treebank, binary version), a binary sentence-level sentiment classification task (positive or negative); 2. SQuAD\cite{rajpurkar2016squad} (Stanford Question Answering Dataset), a reading comprehension task where the model is asked to extract the correct answer span directly from the provided passage. We select these two tasks to comprehensively evaluate the model's improvement during fine-tuning, considering both basic abilities, such as understanding and classification (SST2), and more advanced abilities, including contextual understanding and information retrieval (SQuAD). In terms of parameter setting, the perturbation scale in \cref{eq:projgrad} is set to $\mu = 0.001$, and the number of training samples is fixed as $1000$. Moreover, we also set $\epsilon=5$, $\delta=0.01$, $T=8000$ throughout this section. For each reported instance, we conduct a grid search to determine the learning rate that yields the best performance. The search grid is shown in \Cref{table:lr_grid}. Each data point in the main results represents the average of $4$ trials, where we perform four independent runs with random seeds and report the mean and standard deviation of the corresponding measurements to minimize bias due to chance. As for the value of $e_0$ and $A $, and $\gamma$, we set $e_0 = 0.4960$, $A = \tilde{A} = 0.998$, and $\gamma=100$, these choices will be elaborated in \cref{srp}. 

To evaluate the performance w.r.t. different channel conditions, we define the maximum signal-to-noise ratio as \begin{align}\label{eq:maxsnr}
    \textstyle \text{SNR}_{\max} = P / dN_0,
\end{align} where $dN_0$ represents the power of the channel noises within one communication block, so that devices may optimally transmit with a power strictly smaller than $P$. 


\subsection{Main Results}

We implement the proposed \emph{pAirZero} and \emph{Sign-pAirZero} across wireless channels with different maximum SNRs. We compare the solution-induced power allocation with the following cases:
\begin{enumerate}
    \item \texttt{Perfect}. It provides the upper bound of the metrics in evaluation. In this sense, we assume that the aggregation is noise-free, that is, 
    \begin{align}\label{eq:globalg}
    \textbf{\emph{pAirZero:}}&\quad\textstyle \boldsymbol{g}^{(t)} = \frac{1}{K}\sum_{k=1}^K p_k^{(t)} \boldsymbol{z}^{(t)},
    \\
    \label{eq:globalgd}
    \textbf{\emph{Sign-pAirZero:}}& \quad \textstyle\boldsymbol{g}^{(t)} = \text{sign}\big\{\sum_{k=1}^K \text{Sign}\{p_k^{(t)}\}\big\} \boldsymbol{z}^{(t)}.
    \end{align}
    \item \texttt{Static}. Instead of featuring an adaptive power allocation, it distributes privacy budget evenly across all training iterations by replacing the adaptive term in $c^{(t)}$ in \cref{eq:solutiona} and \cref{eq:solutiond} with the following constant: \begin{align}\label{eq:constantc}
        c^{(t)}=\textstyle \sqrt{\frac{N_0 R_{\sf dp}(\epsilon, \delta)}{2 T \gamma^2}}.
    \end{align}
    \item \texttt{Reversed}. It is observed that the adaptive term of the optimization-induced solutions in both the \emph{pAirZero} and \emph{Sign-pAirZero} implies an increasing channel gain $c^{(t)}$. To verify the effectiveness of this increasing trend, we replace $A^{-\frac{t}{4}}$ and $\tilde{A}^{-t}$ by $A^{\frac{t}{4}}$ and $\tilde{A}^{t}$, respectively. 
\end{enumerate}



\begin{figure}[t]
    \centering
    \includegraphics[width=1.0\linewidth]{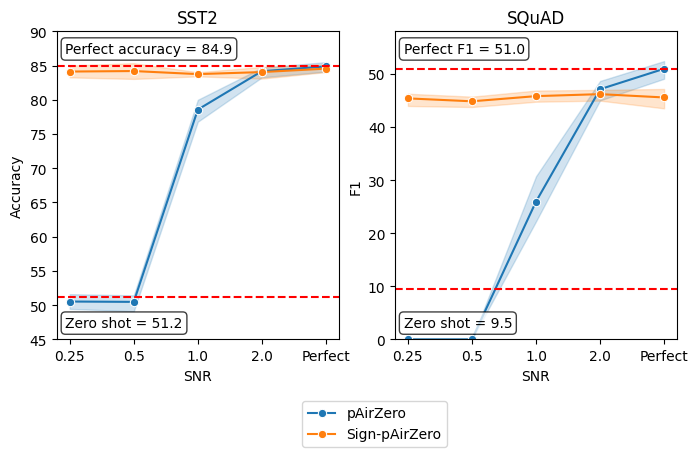}
    \caption{Main results on OPT-125M with SST2 and SQuAD task.\label{fig:big exp}}
\end{figure}

\begin{figure}[t]
    \centering
    \includegraphics[width=1.0\linewidth]{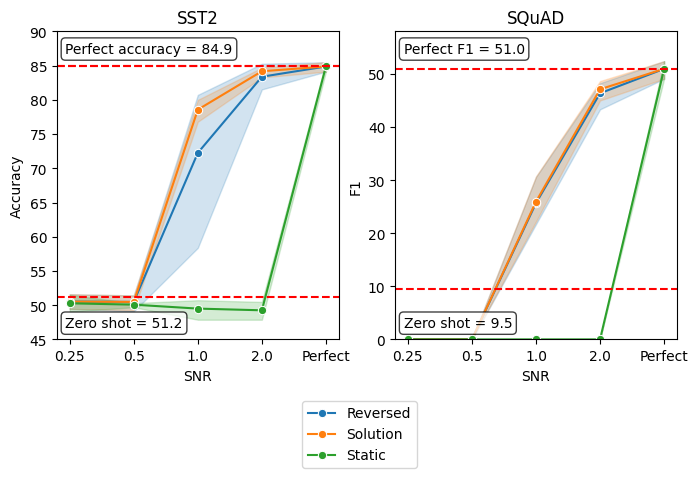}
    \caption{Performance with and without solution-based power allocation with analog modulation.\label{fig:ablation}}
\end{figure}

\textbf{\emph{Sign-pAirZero} is more stable across SNR with privacy constraints.} We report the main results in \cref{fig:big exp}. We found that digital OTAs exhibit small fluctuations in performance (within one standard deviation), whereas analog OTAs are completely compromised in the low SNR regime. This is due to the zeroth-order gradient projection having a wide spread, detailed in \Cref{fig:pdist}. This is because the DP model assumes full privacy loss, with the fluctuation of the disclosed function always in its maximum, as in \Cref{lm:privacy}. However, the privacy loss is often much smaller since the fluctuation is usually smaller. In other words, analog OTA requires a privacy model to overestimate privacy loss, leading to more noisy updates with lower SNR.

\textbf{\emph{pAirZero} has higher possible performance.} We observe that in harder tasks like SQuAD, analog OTA has a lead for more than one standard deviation compared to its digital version. This is due to the larger gradient noise introduced by the one-bit compression on the gradient projection.

\textbf{Solution-based power allocation outperforms other baselines.} We ablate on the power allocation scheme and report the results in \cref{fig:ablation}. It is shown that \verb|Solution| outperforms other baselines. Moreover, it is found that the model is compromised under the \verb|Static| case. This is because ZO-based fine-tuning requires a large number of aggregations, which necessitates a minimal channel gain \cref{eq:constantc} with a large value of $T$, highlighting the need for optimization in \cref{eq:P1}. 

{\color{black}

\subsection{Memory and Communication Efficiency}\label{memory-communication-efficiency}

\begin{table}[h]
\caption{Estimated local minimum memory overhead and per-iteration upload for fine-tuning OPT-125M.}
\centering
\label{table:memory_cost}
\begin{tabular}{lrr}
\hline
              & Memory cost   & Per-iteration Upload \\ \hline
Model size    & $238.88$ MB   & -                    \\ \hline
\emph{Sign-pAirZero} & $\sim 250$ MB & $1$ bit              \\
\emph{pAirZero}      & $\sim 250$ MB & $16$ bits            \\
FO SGD        & $\sim 600$ MB & $238.88$ MB          \\
FO Adam       & $955.58$ MB & $238.88$ MB          \\ \hline
\end{tabular}
\end{table}

We present a minimum memory overhead and per-iteration upload analysis in \cref{table:memory_cost} assuming an FP16 precision. We observe a magnitude-of-orders reduction in per-iteration upload and $75\%$ less memory cost in \emph{Sign-pAirZero} and \emph{pAirZero}, compared to conventional methods relying on FO methods. The massive reduction in memory cost and upload will make \emph{pAirZero} and \emph{Sign-pAirZero} preferred for resource-constrained clients in an FL system.

}

\subsection{Parameter Study}\label{srp}
Since the analysis and problem solution involve key numerical parameters that cannot be directly obtained (e.g., $e_0$, $A$, and $\tilde{A}$) or need to be determined manually (e.g., $\gamma$), we therefore conduct a study to justify our choices of their value.

\subsubsection{Sign Reversing Probability $e_0$}

As stated in \Cref{lm:convergence-digital}, the convergence of \emph{Sign-pAirZero} is dependent on the sign-reversing probability $e^{(t)}$, so knowledge of the range of this value is necessary for us to understand the convergence behavior of \emph{Sign-pAirZero}. Since the explicit expression of $e^{(t)}$ is non-obtainable due to the complicated dynamics from data batches to gradient projections, we resort to an empirical study.

\begin{figure}[h]
    \centering
    \includegraphics[width=0.5\linewidth]{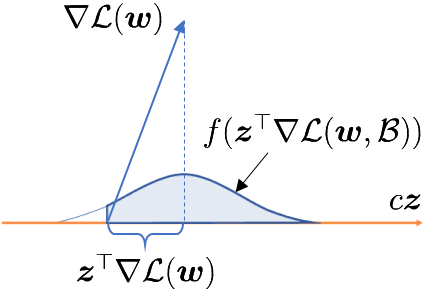}
    \caption{Inherent sign-reversing probability density.\label{fig:pte}}
\end{figure}





\textbf{Simulation Settings.}
We run OPT-125M on the SST2 task. We choose a training set consisting of $5000$ samples, and for every $4000$ iterations, we evaluate the gradient projection by sampling the gradient direction corresponding to seeds $s = 0$ to $39$. We average the gradient projections to obtain $\boldsymbol{z}_s^\top \mathcal{L}(\boldsymbol{w}^{(t)})$. We then uniformly sample $10000$ batches of size $64$ and take an average of the gradient projection $\boldsymbol{z}_s^\top \mathcal{L}(\boldsymbol{w}^{(t)}, \mathcal{B})$. We compute $e_k^{(t)}$ as the proportion of the batches holding a batch gradient projection with its sign different from that of $\boldsymbol{z}_s^\top\nabla \mathcal{L}(\boldsymbol{w}^{(t)})$.

\textbf{Range of Inherent Sign-Reversing Probability.}
We report the measured $e_k^{(t)}$ in \Cref{fig:pte_sim1}. It is noticed that the gradient projections are generally small, mainly due to the high dimensionality of the ambient space. The highest reading of $e_k^{(t)}$, namely $0.4968$, is obtain at $t = 12000$, with $\boldsymbol{z}^\top \nabla \mathcal{L}(\boldsymbol{w}) = -0.3330$.

\begin{figure*}[h]
    \centering
    \includegraphics[width=0.95\linewidth]{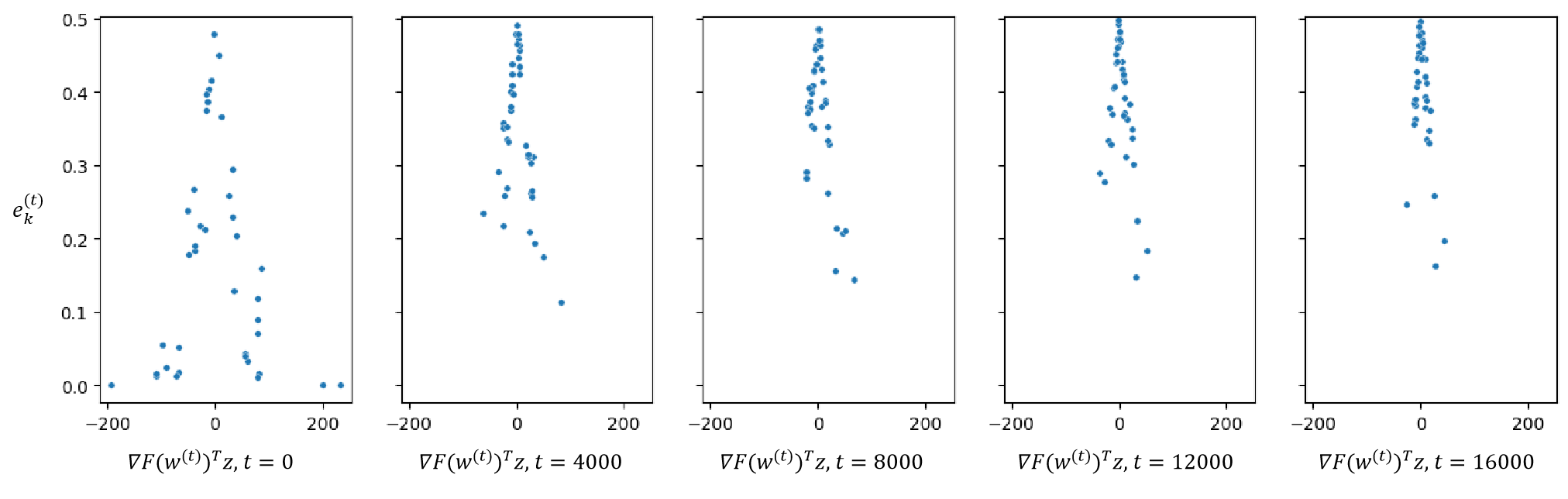}
    \caption{Inherent sign-reversing probability vs. $\boldsymbol{z}_s^\top\nabla\mathcal{F}(\boldsymbol{w}^{(t)})$.}
    \label{fig:pte_sim1}
\end{figure*}

\textbf{Near-Symmetric Distribution of $\boldsymbol{z}^\top \nabla\mathcal{L}(\boldsymbol{w}, \mathcal{B})$.}
In \Cref{fig:pte_sim2}, we report the value of $\boldsymbol{z}_s^\top \nabla\mathcal{L}(\boldsymbol{w}, \mathcal{B})$ from $s=0$ to $4$, with $t=0, 4000, 8000, 12000, 16000$. The red lines marks the corresponding $\boldsymbol{z}^\top \nabla \mathcal{L}(\boldsymbol{w})$. The distributions exhibit an obvious symmetric pattern.

In summary, upon modeling the distribution of the batch gradient projections as a symmetric distribution centered at the proper gradient projection, the local sign-reversing probability is always smaller than $1/2$. Also, we can observe that the analog sensitivity function overestimates the privacy loss mainly due to the wide distribution of the batch gradient projection.

\begin{figure}[h]
    \centering
    \includegraphics[width=1\linewidth]{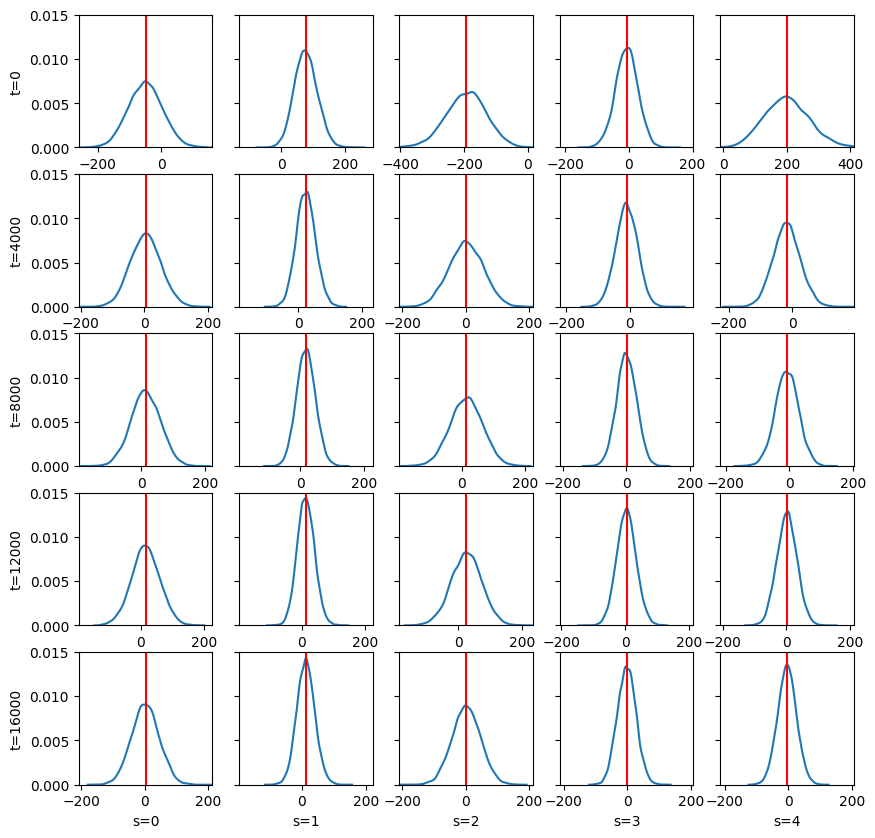}
    \caption{Distribution of $\boldsymbol{z}_s^\top\nabla\mathcal{L}(\boldsymbol{w}, \mathcal{B})$. The red lines are estimated $\boldsymbol{z}^\top \nabla\mathcal{L}(\boldsymbol{w})$.}
    \label{fig:pte_sim2}
\end{figure}

\subsubsection{Contraction Ratio $A$ and $\tilde{A}$}
The contraction ratio absorbs numerous intractable parameters such as $L$, $M$, and $O_r$. Nevertheless, this quantity plays a critical role in the power control. Therefore, an empirical estimation of this parameter is necessary. To this end, we record the loss values throughout one fine-tuning process on OPT-125M with the SST2 task, and estimate the upper bound of $A$ as $\min_t \{ (G^{(t)} / G^{(0)}) ^{1 / t} \}$. The estimated upper bound is $0.998$. We hence set $A$ and $\tilde{A}$ as this value.

\subsubsection{Gradient Projection Clip Threshold $\gamma$}

To avoid infinite privacy loss, we assert \Cref{ass:boundgradproj}. However, gradient projections have a wide range of applications. An overly large $\gamma$ will accordingly result in an overestimated privacy loss, while an excessively small $\gamma$ leads to frequent gradient clips. Both of these situations are undesirable because they all jeopardize the convergence speed of the proposed algorithm. To get a favorable value of $\gamma$, we recorded gradient projections throughout one fine-tuning process on OPT-125M with the SST2 task and made a histogram, which is shown in \Cref{fig:pdist}. It is found that over $97\%$ of the gradient projections fall within $[-100, 100]$, hence we set $\gamma = 100$.

\begin{figure}[h]
    \centering
    \includegraphics[width=0.8\linewidth]{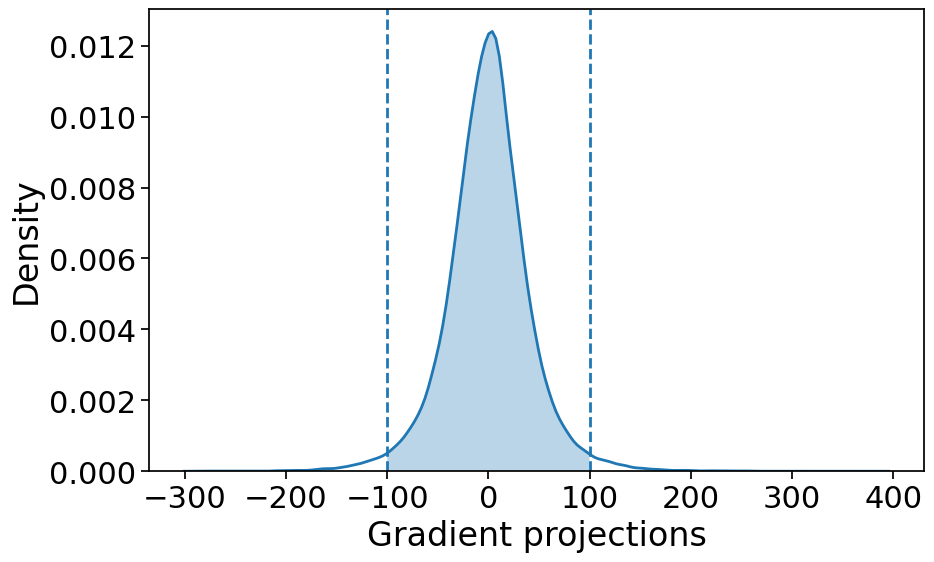}
    \caption{Distribution of gradient projections.\label{fig:pdist}}
\end{figure}



\section{Concluding Remarks}\label{sec:conclude}
In this paper, we have proposed a communication-memory-privacy-efficient over-the-air (OTA) transmission-based federated fine-tuning method, termed \emph{pAirZero}, along with its digital adaptation, \emph{Sign-pAirZero}. The differential privacy (DP) mechanism is embedded into the gradient transmission process to prevent privacy leakage. The proposed method requires only bit-level communication loads and inference-level memory usage. Additionally, it removes the strict synchronization requirements, which are the major obstacle to conventional OTA-based methods. More importantly, we have formulated an optimization model to determine the optimal transmit power and artificial noise level per iteration, ensuring a guaranteed level of privacy protection regardless of the channel noise strength. Numerical experiments demonstrate the superiority of our method.




\appendix

\subsection{Proof of \cref{lm:privacy}}
\label{pf:lm:privacy}
To start, let $\boldsymbol{y}_k\triangleq[y_k^{(1)}, \dots, y_k^{(T)}]$ represent the received signal of client $k$ over all $T$ iterations. The privacy loss for client $k$ after $T$ iterations is given as
\begin{align}
     &\mathcal{L}_{\mathcal{D}', \mathcal{D}''}(\boldsymbol{y}_k)\textstyle = \log \Big(\prod_{t=1}^T \frac{P(y_k^{(t)} | y_k^{(t-1)}, \dots, y_k^{(1)}, \mathcal{D}')}{P(y_k^{(t)} | y_k^{(t-1)}, \dots, y_k^{(1)}, \mathcal{D}'')}\Big) \notag \\
    &\textstyle= \sum\limits_{t=1}^T \log\Big(\exp \big(\frac{-(y_k^{(t)} - c^{(t)} p_{k, \mathcal{D}'}^{(t)})^2}{2 (m_k^{(t)})^2}\big)/\exp \big(\frac{-(y_k^{(t)} - c^{(t)} p_{k, \mathcal{D}''}^{(t)})^2}{2 (m_k^{(t)})^2}\big) \Big) \notag \\
    &\textstyle= \sum\limits_{t=1}^T \log \Big(\exp\big(- \frac{(r_k^{(t)})^2}{2 (m_k^{(t)})^2}\big)/\exp\big(- \frac{(r_k^{(t)} + v_k^{(t)})^2}{2 (m_k^{(t)})^2}\big)\Big)
\end{align}
where $p_{k, \mathcal{D}'}^{(t)}$ is the local gradient projection at iteration $t$ given a dataset $\mathcal{D}'$, $r_k^{(t)}$ is the effective noise, the second line is obtained from \cref{eq:dploss}, the third line is obtained by realizing that $v_k^{(t)}$ is the difference of the observation with different local datasets, namely, 
\begin{align}
    v_k^{(t)} = h_k^{(t)} \alpha_k^{(t)} (p_{k, \mathcal{D}'}^{(t)} - p_{k, \mathcal{D}''}^{(t)}) \leq 2 c^{(t)} \gamma^{(t)}. 
\end{align} 
Following Appendix A in \cite{liu2020privacy}, the privacy violation probability can be bounded by 
\begin{align}
    &\textstyle \mathbb{P}\big( |\sum_{t=1}^T (2 r_k^{(t)} v_k^{(t)} + (v_k^{(t)})^2) / 2 (m_k^{(t)})^2 | > \epsilon\big) \notag \\
    \overset{(a)}{\leq}&\textstyle \mathbb{P}\big(|\sum_{t=1}^T r_k^{(t)}v_k^{(t)}/(m_k^{(t)})^2| > \epsilon - (\sum_{t=1}^T (v_k^{(t)})^2/2 (m_k^{(t)})^2)\big) \notag \\
    =& \textstyle 2 \mathbb{P}\Big( \big(\sum\limits_{t=1}^T r_k^{(t)} v_k^{(t)}/(m_k^{(t)})^2\big)> \epsilon - \big(\sum\limits_{t=1}^T (v_k^{(t)})^2/2 (m_k^{(t)})^2\big)\Big) \notag \\
    \overset{(b)}{\leq}&\textstyle \frac{2\sigma}{s \sqrt{2 \pi}} \exp \left(- \frac{s^2}{2 \sigma^2}\right), \label{eq:target}
\end{align}
where 
\begin{align}
    \textstyle \sigma = (\sum_{t=1}^T (v_k^{(t)}/m_k^{(t)})^2)^{1/2}, \quad s = \epsilon - \sum_{t=1}^T \frac12 (v_k^{(t)}/m_k^{(t)}), \notag
\end{align}
$(a)$ is because $\mathbb{P}(X < -\epsilon - b) \leq \mathbb{P}(X < -\epsilon + b)$ holds for an arbitrary $b \geq 0$, and $(b)$ comes from the following Mills' bound of tail probability of Gaussian probability $X \sim \mathcal{N}(0, \sigma^2)$: 
\begin{align}
    \textstyle\mathbb{P}(X > s) =&\textstyle \frac{1}{\sigma \sqrt{2 \pi}} \int_s^\infty \exp\left(-\frac{x^2}{2 \sigma^2}\right) \textrm{d} x \notag \\
    \leq& \textstyle\frac{1}{\sigma \sqrt{2 \pi}} \int_s^\infty \frac{x}{s} \exp\left(-\frac{x^2}{2 \sigma^2}\right) \textrm{d}x \notag \\
    =& \textstyle\frac{\sigma}{s \sqrt{2 \pi}}\exp{\left(- \frac{s^2}{2 \sigma^2}\right)}. 
    \label{mill-bound}
\end{align}
From \cref{eq:target}, we can immediately obtain the desired result: 
\begin{align}
    \textstyle \mathbb{P}(|\mathcal{L}_{\mathcal{D}', \mathcal{D}''}(y_k)| > \epsilon) \leq \frac{1}{q \sqrt{\pi}}e^{-q^2} < \delta, \notag
\end{align}
where
\begin{align}
    \textstyle q = \frac{\epsilon - \sum_{t=1}^T \frac{1}{2}(v_k^{(t)} / m_k^{(t)})^2}{\sqrt{2 \sum_{t=1}^T (v_k^{(t)} / m_k^{(t)})^2}}, \quad C(x) = \sqrt{\pi} x e^{x^2}. \notag
\end{align}

\subsection{Proof of \cref{lm:convergence}}\label{pf:lm:convergence}

We denote the true zeroth-order batch gradient from mini-batch $\mathcal{B}$ and model $\boldsymbol{w}^{(t)}$ as $\hat{\nabla}F(\boldsymbol{w}^{(t)}; \mathcal{B})$. According to \cref{ass:lsmooth}, we have 
\begin{align}
    \textstyle F(\boldsymbol{w}^{(t+1)}) \leq& \textstyle F(\boldsymbol{w}^{(t)}) - \eta \nabla F(\boldsymbol{w}^{(t)}) ^\top \boldsymbol{g}^{(t)} + \frac{L \eta^2}{2}\|\boldsymbol{g}^{(t)}\|_2^2 \nonumber 
    \\
    \leq& \textstyle F(\boldsymbol{w}^{(t)}) - \eta \nabla F(\boldsymbol{w}^{(t)}) ^\top 
    \hat{\nabla}F(\boldsymbol{w}^{(t)}; \mathcal{B}) \nonumber
    \\
    & \textstyle + \frac{L \eta^2}{2} (1 + \frac{(m^{(t))^2}}{(K c^{(t)})^2})\|\hat{\nabla}F(\boldsymbol{w}^{(t)}; \mathcal{B})\|_2^2 \label{eq:second_line}
    \\
    \leq& \textstyle F(\boldsymbol{w}^{(t)}) - \eta \nabla F(\boldsymbol{w}^{(t)}) ^\top \nabla F(\boldsymbol{w}^{(t)}; \mathcal{B}) \label{eq:fourth_line}
    \\
    &\textstyle + \frac{L \eta^2 O_r}{2} (1 + \frac{(m^{(t))^2}}{(K c^{(t)})^2}) \|\nabla F(\boldsymbol{w}^{(t)};\mathcal{B})\|_2^2, \label{eq:fifth_line} 
\end{align}
where \cref{eq:second_line} comes from the fact that $\boldsymbol{g}^{(t)} = \hat{p}^{(t)} \boldsymbol{z}^{(t)}$ with $\hat{p}^{(t)} \sim \mathcal{N}(p^{(t)}, \frac{(m^{(t))^2}}{(K c^{(t)})^2})$, \cref{eq:fourth_line} is from the unbiasedness of $\hat{\nabla}F(\boldsymbol{w}^{(t)}; \mathcal{B})$, and \cref{eq:fifth_line} comes from proof of Theorem 1 (page 33) of \cite{malladi2023fine} (the norm of the ZO gradient scaled the quadratic norm by $O_r$ times compared to that of FO, where $O_r = \frac{dr + 2d}{d + 2}$ as we set the number of perturbation samples $n=1$ under \Cref{ass:r_effective_rank}). It then follows by \cref{ass:finite_batch_variance}:
\begin{align}
    &F(\boldsymbol{w}^{(t+1)})\leq  F(\boldsymbol{w}^{(t)}) - \eta \|\nabla F(\boldsymbol{w}^{(t)})\|_2^2\nonumber \\
    &\textstyle + \frac{L \eta^2 O_r}{2} (1 + \frac{(m^{(t)})^2}{(Kc^{(t)})^2}) (\|\nabla F(\boldsymbol{w}^{(t)})\|_2^2 + \frac{\text{tr} (\boldsymbol{\Sigma^{(t)}})}{b})\nonumber \\
    \leq&\textstyle F(\boldsymbol{w}^{(t)}) - (\eta - \frac{L \eta^2 O_r}{2}(1 + \frac{(m^{(t)})^2}{(Kc^{(t)})^2}))\|\nabla F(\boldsymbol{w}^{(t)})\|_2^2 \nonumber \\
    &+ \textstyle \frac{\eta^2 L O_r S (m^{(t)})^2}{2 b (K c^{(t)})^2} \nonumber \\
    \leq&\textstyle F(\boldsymbol{w}^{(t)}) - \frac{\eta}{2}\|\nabla F(\boldsymbol{w}^{(t)})\|_2^2 + \frac{\eta^2 L O_r S (m^{(t)})^2}{2 b (K c^{(t)})^2}, \label{eq:converg}
\end{align}
where \cref{eq:converg} is obtained by choosing $0 < \eta \leq \min_t\{1 / (LO_r(1 + \frac{(m^{(t)})^2}{(Kc^{(t)})^2}))\}$ and $S = \max_t \text{tr}(\Sigma^{(t)})$. We then subtract $F^*$ from both sides as follows:
\begin{align}
    \textstyle F(\boldsymbol{w}^{(t+1)}) - F^* \leq (1 - M \eta) (F(\boldsymbol{w}^{(t)}) - F^*) +  \frac{\eta^2 L O_r S (m^{(t)})^2}{2 b (K c^{(t)})^2}. \nonumber
\end{align}
At last, we obtain the desired result by telescoping the above inequality. 



\subsection{Proof of \cref{lm:convergence-digital}}\label{proof:convergence-digital}
By Taylor's expansion, we have
\begin{align*}
F(\boldsymbol{w}^{(t+1)}) &\leq F(\boldsymbol{w}^{(t)}) - \eta \nabla F(\boldsymbol{w}^{(t)})^\top (\boldsymbol{w}^{(t+1)} - \boldsymbol{w}^{(t)})\\
&\textstyle + \frac{1}{2}\eta^2(\boldsymbol{w}^{(t+1)} - \boldsymbol{w}^{(t)})^\top \boldsymbol{H}_{\boldsymbol{w}^{(t)}}(\boldsymbol{w}^{(t+1)} - \boldsymbol{w}^{(t)})
\end{align*}
Taking expectation over $\mathcal{B}$ and $\boldsymbol{z}$, we have
\begin{align*}
&\textstyle \mathbb{E}[F(\boldsymbol{w}^{(t+1)})]\\
\leq&\textstyle F(\boldsymbol{w}^{(t)}) - \eta \mathbb{E}_{z,\mathcal{B}}\big[\frac{\boldsymbol{z}^\top \nabla F(\boldsymbol{w}^{(t)}) \boldsymbol{z}^\top \nabla F(\boldsymbol{w}^{(t)}; \mathcal{B})}{|\boldsymbol{z}^\top \nabla F(\boldsymbol{w}^{(t)};\mathcal{B})|}\big]\\
&\textstyle + \mathbb{E}_{z,\mathcal{B}}\big[\frac{\eta^2\boldsymbol{z}^\top \nabla F(\boldsymbol{w}^{(t)};\mathcal{B}) \boldsymbol{z}^\top \boldsymbol{H}_{\boldsymbol{w}^{(t)}}\boldsymbol{z}^\top\nabla F(\boldsymbol{w}^{(t)}; \mathcal{B})z}{2(\boldsymbol{z}^\top \nabla F(\boldsymbol{w}^{(t)};\mathcal{B}))^2} \big]\\
=&\textstyle F(\boldsymbol{w}^{(t)}) - \eta (1 - 2e^{(t)}) \mathbb{E}_z\left[\frac{\boldsymbol{z}^\top \nabla F(\boldsymbol{w}^{(t)}) \boldsymbol{z}^\top \nabla F(\boldsymbol{w}^{(t)})}{|\boldsymbol{z}^\top \nabla F(\boldsymbol{w}^{(t)})|}\right]\\
&\textstyle + \mathbb{E}\big[\frac{\eta^2\boldsymbol{z}^\top \boldsymbol{H}_{\boldsymbol{w}^{(t)}}\boldsymbol{z}^\top}{2}\big],
\end{align*}
where $e^{(t)}$ is the probability that the aggregated sign differs from the true sign, the equality comes from the fact that $\boldsymbol{z}^\top \nabla F(\boldsymbol{w}^{(t)}) \boldsymbol{z}^\top \nabla F(\boldsymbol{w}^{(t)}; \mathcal{B}) = -\boldsymbol{z}^\top \nabla F(\boldsymbol{w}^{(t)}) \boldsymbol{z}^\top \nabla F(\boldsymbol{w}^{(t)})$ holds with probability $e^{(t)}$. Since $\boldsymbol{z}$ is a standard i.i.d. Gaussian random vector, $\frac{\boldsymbol{z}^\top \nabla F(\boldsymbol{w}^{(t)}) \boldsymbol{z}^\top \nabla F(\boldsymbol{w}^{(t)})}{|\boldsymbol{z}^\top \nabla F(\boldsymbol{w}^{(t)})|}$ will be half-Gaussian with a mean of $(\frac{2}{\pi})^{1/2}\|\nabla F(\boldsymbol{w}^{(t)})\|_2$.
For the last term in the right-hand side, based on \cref{ass:r_effective_rank} we have 
\begin{align}\label{eq:feedsign}
\mathbb{E}[F(\boldsymbol{w}^{(t+1)})]
\leq &\textstyle F(\boldsymbol{w}^{(t)}) - \eta(\frac{2}{\pi})^{1/2} \|\nabla F(\boldsymbol{w}^{(t)})\|_2 \nonumber \\
&\textstyle + 2 \eta e^{(t)}(\frac{2}{\pi})^{1/2}\|\nabla F(\boldsymbol{w}^{(t)})\|_2 + \frac{\eta^2Lr}{2}.
\end{align}
To obtain a convex objective function for the optimality gap optimization problem, we need to decouple $e^{(t)}$ from the multiplication. Using the arithmetic inequality, for $\forall \theta > 0$ we have
\begin{align}
    \textstyle 2 \eta e^{(t)} (\frac{2}{\pi})^{1/2} \|\nabla F(\boldsymbol{w}^{(t)})\|_2 \leq \theta (e^{(t)})^2 + \frac{2 \eta^2}{\theta\pi} \|\nabla F(\boldsymbol{w}^{(t)})\|_2^2. \label{eq:52}
\end{align}
Substituting \cref{eq:52} into \cref{eq:feedsign} yields \begin{align}\label{eq:53}
\mathbb{E}[F(\boldsymbol{w}^{(t+1)})]
\leq &\textstyle F(\boldsymbol{w}^{(t)}) - \eta(\frac{2}{\pi})^{1/2} \|\nabla F(\boldsymbol{w}^{(t)})\|_2 \\
& + \textstyle \theta (e^{(t)})^2 + \frac{2 \eta^2}{\theta\pi} \|\nabla F(\boldsymbol{w}^{(t)})\|_2^2 + \frac{\eta^2Lr}{2}. \nonumber
\end{align}
For $\|\nabla F(\boldsymbol{w}^{(t)})\|_2$ in \cref{eq:53}, take $0 < \eta \leq \min\{\frac{2^{3/2} \pi^{1/2} \theta }{5 \|\nabla F(\boldsymbol{w}^{(t)})\|_2}\}$ and apply \cref{ass:pl} yields 
\begin{align}
    &\textstyle - \eta(\frac{2}{\pi})^{1/2} \|\nabla F(\boldsymbol{w}^{(t)})\|_2 + \frac{2 \eta^2}{\theta\pi} \|\nabla F(\boldsymbol{w}^{(t)})\|_2^2 \nonumber \\
    \leq& \textstyle -\frac{\eta^2}{2 \theta \pi} \|\nabla F(\boldsymbol{w}^{(t)})\|_2^2 \nonumber \\
    \leq& \textstyle - \frac{M \eta^2}{\theta \pi}(F(\boldsymbol{w}^{(t)}) - F^*) \label{eq:54}
\end{align}
Substituting \cref{eq:54} into \cref{eq:feedsign}, and subtracting $F^*$ in both sides of the resulting inequality, we have \begin{align}\label{eq:55}
    &\textstyle \mathbb{E}[F(\boldsymbol{w}^{(t+1)})] - F^*
    \nonumber\\
    \leq &\textstyle (1 - \frac{M \eta^2}{\theta \pi})(F(\boldsymbol{w}^{(t)}) - F^*) + \theta (e^{(t)})^2 + \frac{\eta^2 L r}{2}
    \\
    \leq & \tilde{A} G(\boldsymbol{w}^{t})+\theta(e^{(t)})^2+r 
    \\
    \leq & \textstyle \tilde{A}^T G^{(0)} + \sum_{t=1}^T \tilde{A}^{-t}\big(\theta \cdot (e^{(t)})^2 + r\big)
\end{align}
where the second inequality is obtained by setting $0 < \eta \leq (\frac{2}{L})^{1/2}$ as well as the definition of $\tilde{A}$, and the third inequality is obtained by telescoping the third line.

\subsection{Proof of \cref{upper-bound-et}}
\label{pf:upper-bound-et}
Denote $s_{\text{true}} \triangleq \text{sign}\{\nabla F(\boldsymbol{w}^{(t)})^\top \boldsymbol{z}^{(t)}\}$. For simplicity, we assume that local datasets are independent and identically distributed, which means that $e_k^{(t)}$s are identical across all clients. Nevertheless, the subscript is preserved to distinguish them from the total sign reversing probability $e^{(t)}$.

Suppose $n_k^{(t)} = z_k^{(t)} = 0$, then from (\cref{eq:rec_digital}) we have 
\begin{align}
    y^{(t)} = c^{(t)} (1 - 2w) s_{\text{true}},
\end{align}
where $w \sim \text{Binomial}(K, e_k^{(t)})$. For this case, it holds 
\begin{align}
    \textstyle \mathbb{E}[y^{(t)}] &= c^{(t)}K(1 - 2e_k^{(t)})s_{\text{true}}, \\
    \textstyle \mathbb{V}[y^{(t)}] &= 4(c^{(t)})^2 K e_k^{(t)}(1 - e_k^{(t)}).
\end{align}
Now suppose the noise in $y_k^{(t)}$ is nonzero. Recall that the effective noise is independent of the transmitted signal, therefore, we have 
\begin{align}
    \textstyle \mathbb{V}[y^{(t)}] &= 4(c^{(t)})^2 K e_k^{(t)}(1 - e_k^{(t)}) + (m_k^{(t)})^2.
\end{align}
Further invoking \cref{eq:estp}, we have 
\begin{align}
    \textstyle \mathbb{E}[\hat{p}^{(t)}] &= \textstyle (1 - 2e_k^{(t)}) \cdot s_{\text{true}}, \\
    \textstyle \mathbb{V}[\hat{p}^{(t)}] &= \textstyle \frac{4 e_k^{(t)}(1 - e_k^{(t)})}{K} + \frac{(m^{(t)})^2}{(K c^{(t)})^2}.
\end{align}
Combining \cref{eq:def_et} and the definition of $s_{\text{true}}$, we have 
\begin{align}
    \textstyle e^{(t)} = \mathbb{P}(s_{\text{true}} \cdot \hat{p}^{(t)} < 0),
\end{align}
Recall the Chebyshev-Cantelli inequality \cite{cantelli-chebyshev}, namely, 
\begin{align}
\mathbb{P}(X < 0) = \mathbb{P}(X - \mathbb{E}[X] \leq \lambda)& \leq \mathbb{V}[X] / (\mathbb{V}[X] + \lambda^2), \nonumber\\
& \text{with } \lambda = -\mathbb{E}[X]
\end{align}
Let $X$ be set as $s_{\text{true}} \cdot \hat{p}^{(t)}$, then the Chebyshev-Cantelli inequality leads to 
 \begin{align}\label{eq:upper_bound_et}
    \textstyle (e^{(t)})^2 \leq e^{(t)} \leq & \textstyle \frac{4Ke_k^{(t)}(1-e_k^{(t)}) + \frac{(m^{(t)})^2}{(c^{(t)})^2}}{4Ke_k^{(t)}(1-e_k^{(t)}) + \frac{(m^{(t)})^2}{(c^{(t)})^2} + K^2(1-2e_k^{(t)})^2}
    \nonumber\\
    \leq & \textstyle 
    \frac{4Ke_0(1-e_0) + \frac{(m^{(t)})^2}{(c^{(t)})^2}}{4Ke_0(1-e_0) + \frac{(m^{(t)})^2}{(c^{(t)})^2} + K^2(1-2e_0)^2}
\end{align}
where the last inequality has invoked \cref{eq:e_0} (i.e., $e_k^{(t)}\leq e_0$) as well as the fact that this fractional function is monotonically increasing in the regime of $0 < e_k^{(t)} < 1/2$. 

{\subsection{Proof of \cref{thm:2}}\label{pf:P3}}

Recall that $e^{(t)}$ is upper bounded in \cref{upper-bound-et}, replacing $e^{(t)}$ by this upper bound yields
\begin{align}
\tilde{A}^{-t} &\textstyle \eta \theta (e^{(t)})^2\le 
\tilde{A}^{-t} \eta \theta \frac{B_1 + ( \sum_{k=1}^K (\sigma_k^{(t)})^2 + \frac{N_0}{(c^{(t)})^2} )}
{B_1 + B_2 + ( \sum_{k=1}^K (\sigma_k^{(t)})^2 + \frac{N_0}{(c^{(t)})^2} )}\\
&\textstyle =\tilde{A}^{-t} \eta \theta\Big(1 - 
    \frac{B_2}{B_1+B_2 +( \sum_{k=1}^K (\sigma_k^{(t)})^2 + \frac{N_0}{(c^{(t)})^2} )}\Big)
\end{align}
where
\begin{align}
    B_1 = K^2 \left( 1 - 2e_0 \right), B_2 = 4Ke_0\left(1-e_0\right).
\end{align}
Using fundamental algebra, we have 
\begin{align}
   \textstyle m_k^{(t)} = (\sigma_k^{(t)})^2+N_0/(c_k^{(t)})^2,n_k^{(t)}=(h_k^{(t)})^{2}(c_k^{(t)})^{-2}.
\end{align}
Thus, the original variables can be written as
\begin{align}
    \textstyle (\sigma_k^{(t)})^2= m_k^{(t)}-N_0 n_k^{(t)}/ (h_k^{(t)})^{2}, (c_k^{(t)})^{2}=(h_k^{(t)})^{2}(n_k^{(t)})^{-1}.
    \label{eq:old_variables}
\end{align}
Substituting this into problem (P2), we obtain the following equivalent problem:  
\begin{align}
    \textstyle \min\limits_{\{c_k^{(t)}, \sigma_k^{(t)}\}} & \quad\textstyle\sum_{t=1}^T \tilde{A}^{-t} (1 - 
    \frac{B_2}{B_1+B_2 +( m^{(t)})})\tag{P3}\\
        \text{s.t.} &\textstyle \quad \sum_{t=1}^T \frac{2}{ m^{(t)}} \leq R_{\mathsf{dp}}(\epsilon, \delta), \nonumber\\
    &\textstyle\quad (1/n_k^{(t)})(1 + d(m_k^{(t)}-N_0 n_k^{(t)}/(h_k^{(t)})^{2})^2) \leq P, \ \forall t,\nonumber\\
    &\textstyle\quad  m_k^{(t)}-N_0 n_k^{(t)}/(h_k^{(t)})^{2} \ge 0, \ \forall t,\nonumber\\
    &  \quad m_k^{(t)} \ge 0, n_k^{(t)} \ge 0, \ \forall t.
\end{align}
which is a convex problem. To solve this problem, define the Lagrange function as
\begin{align}
   &\textstyle  \mathcal{L}=\sum\limits_{t=1}^T \tilde{A}^{-t} (1 - 
    \frac{B_2}{B_1+B_2 + m^{(t)}}) + \zeta( \sum_{t=1}^{T} \frac{2}{m_k^{(t)}} - R_{\mathsf{dp}}(\epsilon, \delta)) \notag\\
    &\textstyle+ \sum_{t=1}^T (\xi^{(t)}( N_0 n_k^{(t)}/(h_k^{(t)})^{2}  - m_k^{(t)} )+\notag\\
    &\textstyle \sum\nolimits_{t=1}^T \beta^{(t)}( (1 + d(m_k^{(t)}) - ( d N_0 /(h_k^{(t)})^{2} +  P )n_k^{(t)} )
\end{align}
where \(\zeta\ge 0\), \(\beta^{(t)}\ge 0\) and \(\xi^{(t)}\ge 0\) are the Lagrange multipliers associated respectively with the DP constraint, transmit power constraints, and non-negative parameter constraints. The KKT condition can be given accordingly as 
\begin{align}
    &\textstyle \frac{\partial \mathcal{L}}{\partial m_k^{(t)}} 
    = \tilde{A}^{-t} (
    \frac{B_2}{B_1+B_2 + (m^{(t)})})^2 - 2\zeta/(m_k^{(t)})^{2}\notag \\
    &\qquad\qquad\qquad\qquad\qquad\qquad\qquad  +\beta^{(t)}d -\xi^{(t)} = 0. \label{eq:KKT1}\\
    &\textstyle \frac{\partial \mathcal{L}}{\partial n_k^{(t)}} 
    = -\beta^{(t)}(dN_0/(h_k^{(t)})^2 +P) \notag \\
   &\qquad\qquad\qquad\qquad\qquad\qquad +\xi^{(t)}N_0/(h_k^{(t)})^2 = 0, \label{eq:KKT2} \\
   &\textstyle\zeta(\sum_{t=1}^T \frac{2}{m_k^{(t)}} - R_{\rm dp}(\epsilon,\delta))=0, \label{eq:KKT3} \\
    &   \beta^{(t)}\Big(1 + d(m_k^{(t)}) -( dN_0 / (h_k^{(t)})^2 + P )n_k^{(t)}\Big)=0, \label{eq:KKT4}\\
    &\xi^{(t)}(m_k^{(t)}-N_0 n_k^{(t)}/(h_k^{(t)})^{2})=0, \label{eq:KKT5}\\
    &\textstyle \sum_{t=1}^T \frac{2}{m_k^{(t)}} - R_{\mathsf{dp}}(\epsilon,\delta) \le 0, \label{eq:KKT6}\\
    &1 + d(m_k^{(t)}) - ( dN_0/(h_k^{(t)})^2 + P )n_k^{(t)} \leq 0, \label{eq:KKT7}\\
    &N_0 n_k^{(t)}/(h_k^{(t)})^2 - m_k^{(t)}\le 0. \label{eq:KKT8}
\end{align}
\cref{eq:KKT2} immediately implies 
\begin{align}
    \textstyle \xi^{(t)}= \beta^{(t)}\frac{(dN_0/(h_k^{(t)})^2 + P)}
    {N_0/(h_k^{(t)})^2}\label{eq:KKT2-1}
\end{align}
Plugging this equality into \cref{eq:KKT1} and \cref{eq:KKT5} yields 
\begin{align}
 & \textstyle   \tilde{A}^{-t} \frac{B_2^2}{(B_1+B_2+ m_k^{(t)})^2} - \zeta\frac{2}{( m_k^{(t)})^2} - \beta^{(t)}\frac{P(h_k^{(t)})^2}{N_0} = 0 \label{eq:KKT1-1}\\
 & \textstyle \beta^{(t)}\frac{d ( \sqrt{N_0}/h_k^{(t)} )^2 + P}{( \sqrt{N_0}/h_k^{(t)} )^2}( \frac{N_0}{( h_k^{(t)})^2} n_k^{(t)} - m_k^{(t)} ) = 0 \label{eq:KKT5-1}
\end{align}
Combining \cref{eq:KKT5-1} and \cref{eq:KKT4}, we get the following equation
\begin{align}
  \textstyle \beta^{(t)}( 1 - \frac{P (h_k^{(t)})^2}{N_0} m_k^{(t)}) = 0 
    \label{eq:KKT_beta}
\end{align}
With a fixed $m_k^{(t)}$, we can the lower bound and the upper bound of $n_k^{(t)}$ from \cref{eq:KKT7} and \cref{eq:KKT8}, respectively. To ensure that the lower bound is smaller than the upper bound, $m_k^{(t)}$ should satisfy 
\begin{align}
   \textstyle m_k^{(t)} \ge \frac{N_0}{(h_k^{(t)})^2 P} \label{eq:m-bound-1}
\end{align}
In this case, the power is fully utilized for transmitting the local gradient. Furthermore, from \cref{eq:KKT_beta}, we have the equality \(\beta^{(t)}=0\) if follows \cref{eq:m-bound-1}. With $\beta^{(t)}=0$, \cref{eq:KKT1-1} becomes
\begin{align}
   \textstyle  (\tilde{A}^{-t} B_2^2 - 2\zeta ) ( m_k^{(t)} )^2 - 4\zeta( B_1 + B_2 ) m_k^{(t)} \notag \\- 2\zeta ( B_1 + B_2 )^2 = 0 \label{eq:m-quadratic}
\end{align}
Since $m_k^{(t)}>0$, the positive root of the quadratic equation \cref{eq:m-quadratic} should be the desired solution: 
\begin{align}
   \textstyle m_k^{(t)}= \frac{( B_1 + B_2) ( 4\zeta + \sqrt{8 \tilde{A}^{-t} B_2^2 \zeta} )}{2( \tilde{A}^{-t}B_2^2 - 2\zeta )} \label{eq:m-bound-2}
\end{align}
Since $m_k^{(t)}$ should also satisfy the constraints \cref{eq:KKT7} and \cref{eq:KKT8}, therefore we have 
\begin{align}
   \textstyle  m_k^{(t)} = \max \Big\{ \frac{ N_0}{(h_k^{(t)})^2 P}, 
    \frac{( B_1 + B_2 ) ( 4\zeta + \sqrt{8 \tilde{A}^{-t} B_2^2 \zeta} )}{2 ( \tilde{A}^{-t}B_2^2 - 2\zeta)} \Big\} \label{bound-mkt}
\end{align}
Since $m_k^{(t)}$ should also satisfy \cref{eq:KKT6}, we need to ensure the second term in the right-hand side of (\cref{bound-mkt}) satisfies \cref{eq:KKT6}. This can be achieved by searching for a proper $\zeta$ via bisection search. In particular, from \cref{eq:KKT3} we know that $\zeta$ should be $0$ if $\sum_{t=1}^T 2(h_k^{(t)})^2P/N_0 <R_{\mathrm{dp}}(\epsilon, \delta)$. After obtaining $m_k^{(t)}$, the value of $n_k^{(t)}$ can be obtained by using \cref{eq:KKT7} and \cref{eq:KKT8}, namely, 
\begin{align}
   \textstyle \frac{1 + d m_k^{(t)}}{d (\sqrt{N_0} / (h_k^{(t)}))^2 + P}\le 
   n_k^{(t)}\le \frac{(h_k^{(t)})^2 m_k^{(t)}}{N_0}
   \label{eq:n-bound}
\end{align}
At last, since the optimal value of $m_k^{(t)}$ is known, and the range of $n_k^{(t)}$ is determined by  \cref{eq:n-bound}. According to \cref{eq:old_variables}, the feasible range of $\sigma_k^{(t)}$ and $c_k^{(t)}$ can be located. Recall that we prefer a solution with the minimum transmit power; therefore, let $n_k^{(t)}$ be set to the right-hand side of \cref{eq:n-bound}, which yields the desired result. 





\bibliographystyle{IEEEtran}
\bibliography{references}

\vfill

\end{document}